\newcolumntype{C}[1]{>{\centering\arraybackslash}p{#1}}
\def\endthebibliography{%
  \def\@noitemerr{\@latex@warning{Empty `thebibliography' environment}}%
  \endlist
}
\theoremstyle{definition}
\newtheorem{theorem}{Theorem}
\newtheorem{example}{Example}
\newtheorem{lemma}{Lemma}
\newtheorem{corollary}{Corollary}
\DeclareMathOperator{\image}{im}
\newcommand{\px}{p_{\mathrm{X}}}
\newcommand{\py}{p_{\mathrm{Y}}}
\newcommand{\pz}{p_{\mathrm{Z}}}
\pgfplotsset{compat=1.17}
\begin{document}

\title{Cylindrical and M\"{o}bius Quantum Codes \\ for  Asymmetric Pauli Errors}



\author{
    Lorenzo~Valentini,~\IEEEmembership{Member,~IEEE,} Diego Forlivesi,~\IEEEmembership{Graduate~Student~Member,~IEEE,}
    and~Marco~Chiani,~\IEEEmembership{Fellow,~IEEE}
    \thanks{
    The authors are with the Department of Electrical, Electronic, and Information Engineering ``Guglielmo Marconi'' and CNIT/WiLab, University of Bologna, 40136 Bologna, Italy. E-mail: \{lorenzo.valentini13, diego.forlivesi2, marco.chiani\}@unibo.it.
    This work is supported in part by PNRR MUR project PE0000023-NQSTI (Italy) financed by the European Union – Next Generation EU.}
}

\maketitle 

\begin{acronym}
\small
\acro{AWGN}{additive white Gaussian noise}
\acro{BCH}{Bose–Chaudhuri–Hocquenghem}
\acro{CDF}{cumulative distribution function}
\acro{CRC}{cyclic redundancy code}
\acro{LDPC}{low-density parity-check}
\acro{ML}{maximum likelihood}
\acro{MWPM}{minimum weight perfect matching}
\acro{QECC}{quantum error correcting code}
\acro{PDF}{probability density function}
\acro{PMF}{probability mass function}
\acro{MPS}{matrix product state}
\acro{WEP}{weight enumerator polynomial}
\acro{WE}{weight enumerator}
\acro{QLDPC}{quantum low-density parity check}
\acro{CSS}{Calderbank, Shor and Steane}
\acro{WE}{weight enumerator}
\acro{BD}{bounded distance}

\end{acronym}
\setcounter{page}{1}

\begin{abstract}

In the implementation of quantum information systems, one type of Pauli error, such as phase-flip errors, may occur more frequently than others, like bit-flip errors.
For this reason, quantum error-correcting codes that handle asymmetric errors are critical to mitigating the impact of such impairments. 
To this aim, several asymmetric quantum codes have been proposed. 
These include variants of surface codes like the XZZX and ZZZY surface codes, tailored to preserve quantum information in the presence of error asymmetries.
In this work, we propose two classes of Calderbank, Shor and Steane (CSS) topological codes, referred to as cylindrical and M\"{o}bius codes, particular cases of the fiber bundle family. 
Cylindrical codes maintain a fully planar structure, while M\"{o}bius codes are quasi-planar, with minimal non-local qubit interactions.
We construct these codes employing the algebraic chain complexes formalism, providing theoretical upper bounds for the logical error rate.
Our results demonstrate that cylindrical and M\"{o}bius codes outperform standard surface codes when using the minimum weight perfect matching (MWPM) decoder.
\end{abstract}

\begin{IEEEkeywords}
Quantum Topological Codes, Quantum Error Correcting Codes, CSS Codes, Asymmetric Errors.
\end{IEEEkeywords}

\section{Introduction}

The utilization of the distinctive properties of quantum mechanics has significantly broadened the scope of information handling, spanning across sensing, processing, and communication. \cite{Kim:08,Pre:18,OSW:18,NAP:19,WehElkHan:18}. One of the main challenges for quantum information systems concerns mitigating the effects of noise resulting from the inevitable quantum-environment interactions  \cite{Sho:95,Laf:96,Kni:97,FleShoWin:08,RehShi:21}. 
Quantum error correcting codes, with redundant quantum state representations, are therefore crucial for achieving fault-tolerant quantum computation, quantum memories, and quantum communication systems~\cite{Ter:15, MurLiKim:16, Bab:19, Val24:QComm}. 

Stabilizer codes represent a significant class of quantum error-correcting codes~\cite{Got:09}.
Within this category, two overlapping subclasses are of practical interest: topological codes and \ac{CSS} codes \cite{CalSho:96, Ste:96, BraKit:98}.
In fact, despite the increased overhead associated with \ac{CSS} codes compared to their non-\ac{CSS} counterparts, their use simplifies the design of fault-tolerant quantum computing procedures~\cite{Got:96, HorFowDev:12}.
Admitting the use of long-range interactions between qubits, several studies have led to the development of good quantum LDPC codes \cite{BreNikEbe:21a, PanKal22:QLDPC, Lev:QuantumTanner22}.

On the other hand, topological codes offer the advantage of requiring only nearest-neighbor interactions between qubits, a crucial feature for architectures with limited qubit interaction capabilities. 
Some known instances of topological \ac{CSS} codes are the toric and surface codes. 
Among the two, surface codes have attracted more attention due to their planar structure~\cite{FowMarMar:12, FowSteGro:09, Rof:19, Bro:23, ForValChi24:JSAC}.
Finally, the existence of efficient decoding algorithms further ensures its success~\cite{Hig:22, Del:21, ForValChi:24STM}.
To validate this claim, some successful implementations of surface codes have already been achieved~\cite{AchRajAle:22, KriSebLac:22, ZhaYouYe:22, BluDolEve:23, Ach:24}.

In practical scenarios, quantum technologies are more prone to dephasing noise, therefore suffering more phase-flip errors compared to bit-flip ones~\cite{IofMez:07, Ngu:16, Les2020:DephasingNoise, Reg2024:QubitOnlyPhaseFlip}.
Due to this, asymmetric quantum codes have been proposed as a solution for scenarios where strong asymmetries in quantum channel errors arise~\cite{SarKlaRot:09, AtaTucBar:21, ChiVal:20a}. 
In \cite{SarKlaRot:09}, codes are constructed starting from classical codes, such as \acl{BCH} and \acl{LDPC} ones, using the \ac{CSS} construction. 
In \cite{ChiVal:20a}, non-\ac{CSS} asymmetric codes are constructed using the quantum Hamming bound, by syndrome assignment in order to find the shortest possible codes able to guarantee a certain asymmetric error correction capability.
In \cite{AtaTucBar:21}, a variant of surface codes has been proposed, named XZZX surface code. 
At the cost of losing the \ac{CSS} structure of surface codes, the XZZX code exhibits a performance boost in presence of quantum channel asymmetries when compared to the conventional surface code. 
Recently, a variant named ZZZY surface codes, tailored for asymmetric errors has been proposed in \cite{for:24ZZZY}.
On the other hand, to keep a \ac{CSS} structure, hypergraph product code~\cite{TilZem:13} construction and its generalization based on fiber bundle~\cite{HasMatHaa:21} have been proposed.

In this paper we investigate classes of quantum topological and \ac{CSS} codes referred to as cylindrical and M\"{o}bius codes, particular cases of the fiber bundle codes. 
Both codes can be constructed starting from a surface code and attaching two opposite boundaries directly for the cylindrical code or with a half-twist for the M\"{o}bius one.
Cylindrical codes exhibit a two-dimensional topological structure that can be configured over planar lattices, and therefore necessitating only local qubit interactions in two dimensions. 
On the other hand, despite the topological structure of M\"{o}bius codes, in two dimension they are quasi-planar, in the sense that few non-local qubit interactions are required.
In this scenario, a natural implementation choice could be to leverage the inherent mobility offered by reconfigurable atom arrays~\cite{Xu24:reconfigurableAtoms}.
These architectures allow for processor connectivity to be reconfigured during quantum evolution by shuttling atoms around in optical tweezers, with minimal decoherence.
As a result, they are particularly well-suited for realizing a limited number of remote connections~\cite{Xu24:reconfigurableAtoms}.
On the other hand, cylindrical codes can be implemented on superconducting quantum computers as surface codes.

We show that cylindrical and M\"{o}bius codes outperform surface codes with the same number of qubits, both on symmetric and asymmetric channels, using a \acf{MWPM} decoder. 
We analytically quantify this advantage employing the concept of \acf{WE} for the undetectable errors, and we numerically assess the performance of these codes via Monte Carlo simulations. 
The key contributions of the paper can be summarized as follows:

\begin{itemize}
    \item We introduce the cylindrical codes, describing their structure and deriving the code parameters starting from chain complexes formalism. 
    \item By applying twists in the cylindrical codes construction, we derive a variant that we name M\"{o}bius codes. 
    \item We analytically investigate the performance of M\"{o}bius and cylindrical codes over symmetric and asymmetric channels, assuming a \ac{MWPM} decoder,
    for small codes.
    \item We provide analytical bounds exploiting the concept of \ac{WE} for the undetectable errors and the quantum MacWilliams identities, for any code dimension. 
    \item We numerically verify the analysis of the proposed code performance with a \ac{MWPM} decoder over depolarizing and polarizing channels.
\end{itemize}

This paper is organized as follows.
Section~\ref{sec:preliminary} introduces preliminary concepts and models together with some background material. 
In Section~\ref{sec:CylMob} we first describe the structure of cylindrical and M\"{o}bius codes and we derive the code parameters using chain complexes formalism from topology. 
Then we analyse the performance of cylindrical codes through the concept of \ac{WE} for the undetectable errors. 
Numerical results are shown in Section~\ref{sec:NumRes}.

\section{Preliminaries and Background}
\label{sec:preliminary}

\subsection{Stabilizer codes}

A qubit is an element of the two-dimensional Hilbert space $\mathcal{H}^{2}$, with basis $\ket{0}$ and $\ket{1}$ \cite{NieChu:10}. 
The Pauli operators $\M{I}, \M{X}, \M{Z}$, and $\M{Y}$, are defined by  $\M{I}\ket{a}=\ket{a}$, $\M{X}\ket{a}=\ket{a\oplus 1}$, $\M{Z}\ket{a}=(-1)^a\ket{a}$, and $\M{Y}\ket{a}=i(-1)^a\ket{a\oplus 1}$ for $a \in \lbrace0,1\rbrace$. These operators either commute (e.g. $\M{I}\M{X}=\M{X}\M{I}$) or anticommute (e.g. $\M{X}\M{Z}=-\M{Z}\M{X}$) with each other. 
Similarly, when considering Pauli operators on $n$ qubits, along with the same multiplicative factors, one constructs the $\mathcal{G}_n$ Pauli group \cite{Got:09,NieChu:10}.
We indicate with $[[n,k,d]]$ a \ac{QECC} that encodes $k$ information qubits (called logical qubits), into a codeword\footnote{A codeword is any state in the code space.} of $n$ qubits  $\ket{\psi}$ (called data or physical or codeword qubits), able to correct all patterns up to $t = \lfloor(d-1)/2 \rfloor$ errors and, usually, some patterns of more errors. 
Using the stabilizer formalism, we start by choosing $n-k$ independent and commuting operators $\M{G}_i \in \mathcal{G}_n$, called generators, where $\mathcal{G}_n$ is the Pauli group on $n$ qubits \cite{Got:09,NieChu:10}.   
The subgroup of $\mathcal{G}_n$ generated by all combinations of the $\M{G}_i \in \mathcal{G}_n$ is called stabilizer and indicated as $\mathcal{S}$. 
The code $\mathcal{C}$ is the set of quantum states (or codewords) $\ket{\psi}$ stabilized by $\mathcal{S}$, i.e., satisfying 
$\M{S}\ket{\psi}=\ket{\psi} \, \forall \M{S} \in \mathcal{S}$, or, equivalently, $\M{G}_i \ket{\psi}=\ket{\psi},\, i=1, 2, \ldots, n-k$. 
Generators are used to extract the error syndrome by the mean of ancilla qubits.
Finally, the operators $\M{L}$ which commute with the stabilizer group, but $\M{L} \notin \mathcal{S}$, are called logical operators and represent undetectable errors as they turn one codeword into another.

\begin{figure}[t]
    \centering
    \includegraphics[width=\columnwidth]{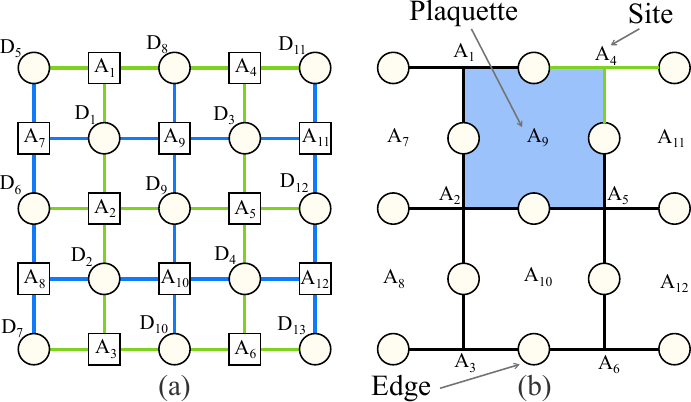}
    \caption{Pictorial representation of a $[[13, 1, 3]]$ surface code. (a) Actual physical representation of the qubits of the code. Data qubits in circles and ancilla qubits on squares. (b) Simplified representation of the same lattice.}
    \label{fig:SurfaceExplained}
\end{figure}

Among stabilizer codes we find the topological codes.
One of the main advantage of these codes is that they require only nearest-neighbor interactions between qubits.
In particular, arranging qubits on a planar grid with boundaries we obtain surface codes~\cite{Rof:19}, which is even more appealing from an architecture perspective.
An example of a $[[13, 1, 3]]$ surface code is depicted in Fig.~\ref{fig:SurfaceExplained}. 
In Fig.~\ref{fig:SurfaceExplained}a the actual physical disposition  where codeword and ancilla qubits are represented in circles and squares, is provided.
Here, the connections between squares and circles represent a particular measurement (green color stands for $\M{X}$ measurement and blue color stands for $\M{Z}$ measurement).
For example, the ancilla $A_1$ measures the codeword according to the $\M{X}_1\M{X}_5\M{X}_8$ generator.
In Fig.~\ref{fig:SurfaceExplained}b we also report a simplified notation having in the edges of a grid the codeword qubits (emphasized with circles), while the ancillary qubits are located both in the vertexes and the faces of the lattice.
In particular, qubits located in the vertexes are called \lq\lq sites\rq\rq, while qubits located in the faces are called \lq\lq plaquettes\rq\rq~\cite{DenKitLan:02, HorFowDev:12}.
Since the surface codes have boundaries, some ancillas (sites or palquettes) will be adjacent to three qubits (edges).
To differentiate them to other ancillas, we define them as boundary sites (e.g., $A_1$, $A_3$, $A_{4}$, and $A_{6}$) and as boundary paquettes (e.g., $A_7$, $A_8$, $A_{11}$, and $A_{12}$).
In surface codes, logical operators can be easily identified: $\M{Z}_L$ ($\M{X}_L$) operator consists of a tensor product of $\M{Z}$'s ($\M{X}$'s) acting on a chain of qubits running from an edge belonging to a boundary plaquette (site) to an edge belonging to a plaquette (site) on the opposite boundary of the lattice.
Finally, the importance of topological codes is enhanced by the availability of efficient decoders \cite{Del:21,Hig:22,ForValChi:24STM}. 
In this paper we consider the \ac{MWPM}. 
This decoder finds the shortest way to connect pairs of ancilla qubits that anticommute with the error, i.e. ancillas that have detected an error.

\subsection{Theoretical Performance Analysis of Quantum Codes}
Considering a quantum stabilizer code, if we measure the codeword according to the generators $\M{G}_i$ with the aid of ancilla qubits, the error collapses on a discrete set of possibilities represented by combinations of Pauli operators $\M{E}\in \mathcal{G}_n$ \cite{Got:09}. 
We refer to this $\M{E}$ as Pauli error. 
The weight of an error $\M{E} \in \mathcal{G}_n$ is the number of single qubits Pauli operators which are not equal to the identity.  
A common channel model is the one characterized by errors occurring independently and with the same statistic on the individual qubits of each codeword. In this model, the error can be $\M{X}$, $\M{Z}$ or $\M{Y}$ with probabilities $\px$, $\pz$, and $\py$, respectively. The probability of a qubit generic error is $p = \px + \pz + \py$.
Two important models are the \emph{depolarizing channel} where $\px = \pz = \py = p / 3$, and the \emph{phase-flip channel} where $p=\pz$, $\px = \py=0$. 
In the case of asymmetric channels with $p_\mathrm{X} = p_\mathrm{Y}$, the asymmetry is described by the bias parameter $A = 2\pz /(p - \pz)$. For instance, in the phase flip channel $A \to \infty$, and the depolarizing channel is obtained for $A=1$. In the following, we will also adopt the notation $[[n, k,d_\mathrm{X}/d_\mathrm{Z}]]$ for asymmetric codes able to correct all patterns up to $t_\mathrm{X} = \lfloor(d_\mathrm{X}-1)/2\rfloor$ Pauli $\M{X}$ errors and $t_\mathrm{Z} = \lfloor(d_\mathrm{Z}-1)/2\rfloor$ Pauli $\M{Z}$ errors. 

We define $f_j(i,\ell)$ as the fraction of errors of weight $j$, with $i$ Pauli $\M{Z}$ and $\ell$ Pauli $\M{X}$ errors, which are not corrected by a complete decoder. Hence, the error probability of a \ac{QECC}, also known as logical error rate, can be written as \cite{ForValChi:23}
\begin{align}
\label{eq:rho_L}
    p_L = \sum_{j = 0}^{n} \binom{n}{j}(1-p)^{n-j} p^j (1-\beta_j)
\end{align}
with
\begin{align}
    1-\beta_j = \frac{1}{p^j}\sum_{i = 0}^{j}\binom{j}{i} \, \pz^i \, \sum_{\ell = 0}^{j-i} \binom{j-i}{\ell}\, \px^\ell \, \py^{j-i-\ell} f_j(i,\ell)\,
\end{align}
For example, for the depolarizing channel, $\beta_j$ can be interpreted as the fraction of errors of weight $j$ that a complete decoder is able to correct. 
In general, $\beta_j$ depends on the code structure, on the decoder, and on the channel asymmetry parameter $A$. 
Starting from \eqref{eq:rho_L} we can approximate the logical error rate
for $p \ll 1$ as
\begin{align}
\label{eq:error_probWithBetaApprox}
p_\mathrm{L} 
&\approx \left(1-\beta_{t+1}\right) \binom{n}{t+1}p^{t+1} \,.
\end{align}
Hence, given $\beta_{t+1}$, we can evaluate the asymptotic logical error rate of a quantum code.
In a similar way, we can find the asymptotic performance of an asymmetric quantum $[[n, k, d_X/d_Z]]$ code by simply set $t = \min(t_\mathrm{X},t_\mathrm{Z})$. 
Note that, $\max(t_\mathrm{X},t_\mathrm{Z})$ affects the term $f_j(i,\ell)$ inside $\beta_j$.

\subsection{Undetectable errors weight enumerator from Quantum MacWilliams identities}

The values of $\beta_{j}$ for the specific decoder and quantum channel can be evaluated by exhaustive search or ad-hoc reasoning on the logical operators~\cite{ForValChi:23}. 
Despite of the adopted method, when these values are tabulated anyone can plot the actual performance without implementing the decoder.
To aid the derivation of $\beta_{j}$, alongside a $[[n,k,d]]$ quantum code, we will provide the undetectable errors \ac{WE} polynomial as 
\begin{align}
\label{eq:deflogicals_eq}
L(z) &= \sum_{w = 0}^{n} L_w z^w
\end{align} 
where $L_w$ is the number of undetectable errors (logical operators) of weight $w$. 
This polynomial can be computed starting from the code's generators by using the quantum MacWilliams identities \cite{ShoPetLaf:97}. 
Specifically, for an $[[n,k,d]]$ stabilizer code we have that the undetectable errors \ac{WE} is defined as~\cite{ForValChi:23}
\begin{align}
\label{logicals_eqMC}
L(z) = \frac{1}{2^k} B(z) - \frac{1}{4^k} A(z) 
\end{align} 
%
where $\frac{1}{4^k} A(z)$ and $\frac{1}{2^k} B(z)$ are equal to the stabilizer \ac{WE} and the normalizer \ac{WE} of the code, respectively \cite{CaoChuLac:22}. 
Considering the connection between stabilizer codes and codes over $\mathbb{F}_4$, the evaluation of $A(z)$ can be seen as the computation of the weight distribution of classical codes. 
To find such distribution, it is possible to exploit software tools related to coding theory, such as MAGMA \cite{BosWieCan:97}.
Given $A(z)$, we can obtain $B(z)$ using the quantum MacWilliams identities \cite{ ShoPetLaf:97, Rai:98} 
\begin{align}
B_w &= \frac{1}{2^n} \sum_{\ell = 0}^{n} \; \sum_{s = 0}^{w} \;\; \binom{\ell}{s} \binom{n-\ell}{w-s} (-1)^s 3^{w-s}A_\ell \,.
\end{align}
Then, from \eqref{logicals_eqMC} we derive the undetectable errors \ac{WE}, $L(z)$.

One approach that can be taken to derive the performance of a code is the logical operator analysis.
In fact, using $L(z)$ together with some understanding of the structure of the code's logical operators, ad-hoc counting could be performed on small codes to obtain the exact value of $\beta_{j}$.
In this procedure, given the total number of logical operators $L_w$ of weight $w$, the first step is to identify all such operators in the code under examination. 
These operators are then grouped into sets based on their similar structure, which typically results in similar performance behavior, often due to inherent symmetries.
Finally, considering that $j$ errors are introduced by the channel, we search for the error patterns that could trigger one of the logical operator of weight $w$ when the decoder introduces $w-j$ errors.
Note that, during this evaluation, it is important to remember that logical operators may share common operators on the same qubits. 
Therefore, extra care must be taken to avoid double counting.
This approach will be used, together with several explanatory examples, in Section~\ref{subsec:CylPerfAnalysis} and Section~\ref{subsec:MobPerfAnalysis} to evaluate the exact value of $\beta_2$ for both the cylindrical and M\"{o}bius codes with $d=3$.

For large codes, this approach becomes infeasible. 
Therefore, bounds that rely solely on the knowledge of $L(z)$ and the code's structure can be used to estimate its performance.
This will be addressed for both the cylindrical and M\"{o}bius codes in Section~\ref{subsec:UpperBounds}, to estimate the performance for any code dimension.

\subsection{Topological Interpretation of Linear Codes} \label{sec:QLDPC_CC}

In the following, we introduce a few concepts from topology that we later use to build quantum cylindrical codes and derive their parameters. 

A \textit{chain complex} $C$ is a collection of vector spaces $\{C_i\}$ for $i = 0,1,...,m$,   and linear maps \cite{BreGle:13, HatAll:05}
\begin{equation}
\label{eq:cchains}
\partial_i : C_i \xrightarrow{} C_{i-1}
\end{equation}
with the requirement that $\partial_i \partial_{i+1} = 0$, or, equivalently, that $ \image\, \partial_{i+1} \subseteq \ker \partial_i$. Note that the trivial operators $\partial_0 : C_0 \xrightarrow{} \{0\}$ and $\partial_{m+1} : \{0\} \xrightarrow{} C_m$ are implicit. 
In particular, each vector space has a canonical basis, which elements are called \textit{i-cells}, while vectors of $C_i$ are named \textit{i-chains}.
It is possible to express the \textit{i-th homology} as the quotient space
%
%
%
\begin{equation}
\label{eq:hom}
H_i\left(C\right) = \frac{\ker \partial_i}{\image\, \partial_{i+1} }
\end{equation}
where elements of $\ker \partial_i$ and $\image\, \partial_{i+1} $ are called \textit{i-cycles} and  \textit{i-boundaries}, respectively. Each chain complex has a dual chain, or \textit{cochain}. This is defined as the sequence of dual vector spaces $C_{i}^{\ast}$ (the space of linear functionals of $C_{i}$) and dual maps
\begin{equation}
\label{eq:cochains}
 \partial^\ast_{i+1}: C_{i+1}^{\ast} \leftarrow{} C_{i}^{\ast}.
\end{equation}
The \textit{i-th cohomology} is
\begin{equation}
\label{eq:cohom}
H^i\left(C\right) = \frac{\ker \partial_{i+1}^\ast}{\image\, \partial_{i}^\ast }
\end{equation}
where $\dim H_i(C) = \dim H^i(C)$. 
Moreover, we define
\begin{align}
    \xi_i &\triangleq \min \left\{ 
     f_i(\V{c}) \,|\, \V{c} \in \ker \partial_i \setminus \image\, \partial_{i+1} 
    \right\}\\
    \zeta_i &\triangleq \min \left\{ 
     f_i(\V{c}) \,|\, \V{c} \in \ker \partial_{i+1}^\ast \setminus \image\, \partial_{i}^\ast  
    \right\} 
\end{align}
where $f_i(\V{c}): C_i \to \mathbb{N}$ is a function that counts how many formally summed terms are in $\V{c} \in C_i$. For $C_i \subseteq \mathbb{F}_2^m$, $f_i(\cdot)$ coincides with the Hamming weight function $w_\mathrm{H}(\cdot)$ on a binary $m$-tuple.

\begin{example}[\emph{Topological interpretation of classical linear codes}]
Any classical $[n,k,d]$ binary linear code can be seen as a 2-term chain complex 
\begin{equation}
\label{eq:ccode}
C = C_1 \xrightarrow{\partial_1} C_0
\end{equation}
where $\partial_1 = H \in \mathbb{F}_2^{r\times n}$, with $r \ge n-k$ and $\rank(H) = n-k$, is the linear map given by the parity check matrix, $C_1 = \mathbb{F}_2^n$, and $C_0 = \mathbb{F}^{r}_2$. 
We can compute $H_1(C) = \ker \partial_1$ and $\xi_1 = \min \left\{ w_\mathrm{H}(\V{c}) \,|\, \V{c} \in \ker \partial_1 \right\}$.
By definition, we have that $\ker \partial_1 \subseteq \mathbb{F}_2^n$ is the space of the codewords which implies that $k = \dim( \ker \partial_1 ) = \dim(H_1(C))$, $\image\, \partial_1$ is the space of the parity checks (or error syndromes), and the code minimum distance is $d = \xi_1$.
Finally, we point out that, when $H$ is full rank, we have $r = n-k$, $\image\, \partial_1 = \mathbb{F}^{n-k}_2$, and $H_0(C) = \mathbb{F}^{n-k}_2 / \mathbb{F}^{n-k}_2 \cong 0$.
On the other hand, when $H$ is not full rank, $r > n-k$, $\image\, \partial_1 \subset \mathbb{F}^{r}_2$ and in particular $ \dim (\image\, \partial_1) = n -k $, and $H_0(C) = \mathbb{F}^{r}_2 / \image\, \partial_1 \cong {\mathbb{F}_2}^{r - n + k} $.
\end{example}

\begin{lemma}[\emph{\ac{CSS} binary construction~\cite{SarKlaRot:09}}]
Consider two linear codes $\mathcal{C}_x$ and $\mathcal{C}_z$ with parameters $[n,k_x]$ and $[n,k_z]$, respectively. If $\mathcal{C}^\perp_x \subseteq \mathcal{C}_z$ there exists an asymmetric $[[n, k_x+k_z-n, d_\mathrm{X}/d_\mathrm{Z}]]$ quantum code where $d_\mathrm{X} = \min \left\{ 
     w_H(\V{c}) \,|\, \V{c} \in \mathcal{C}_x \setminus \mathcal{C}^\perp_z
\right\}$ and $d_\mathrm{Z} = \min \left\{ 
     w_H(\V{c}) \,|\, \V{c} \in \mathcal{C}_z \setminus \mathcal{C}^\perp_x
\right\}$. 
\end{lemma}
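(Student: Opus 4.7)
The plan is to give a direct stabilizer construction and verify each parameter in turn, which is the standard CSS recipe. The key observation is that the hypothesis $\mathcal{C}_x^\perp \subseteq \mathcal{C}_z$ is, by orthogonal complement, equivalent to $\mathcal{C}_z^\perp \subseteq \mathcal{C}_x$, and this is precisely the commutation condition needed between $X$-type and $Z$-type stabilizers.

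First I would fix full-row-rank parity check matrices $H_x \in \mathbb{F}_2^{(n-k_x)\times n}$ and $H_z \in \mathbb{F}_2^{(n-k_z)\times n}$ whose rows span $\mathcal{C}_x^\perp$ and $\mathcal{C}_z^\perp$, respectively. I then declare the $X$-type generators to be $X^{g}$ as $g$ ranges over the rows of $H_z$, and the $Z$-type generators to be $Z^{h}$ as $h$ ranges over the rows of $H_x$. Two Paulis of the same type always commute, so the only commutation constraint to verify is between mixed pairs: $X^{g}$ and $Z^{h}$ commute iff $g \cdot h \equiv 0 \pmod 2$. Since $g \in \mathcal{C}_z^\perp \subseteq \mathcal{C}_x$ and $h \in \mathcal{C}_x^\perp$, the inner product vanishes, so the generators form an abelian subgroup of $\mathcal{G}_n$, and none of them is $-\mathbb{I}$ by construction. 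Moreover the $2n-k_x-k_z$ generators are independent because the two row sets are individually independent and act on disjoint Pauli types, so the stabilizer defines a quantum code encoding exactly $n - (2n - k_x - k_z) = k_x + k_z - n$ logical qubits.

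For the distances I would characterize pure-$X$ and pure-$Z$ logical operators separately, which suffices in the CSS setting because every undetectable error can be written (up to stabilizer equivalence) as a product of a pure-$X$ and a pure-$Z$ Pauli. A pure $X^e$ normalizes the stabilizer iff it commutes with every $Z$-generator $Z^h$ for $h$ a row of $H_x$, i.e.\ iff $H_x e^{\mathsf T} = 0$, which means $e \in \mathcal{C}_x$; it lies inside the stabilizer iff $e \in \mathcal{C}_z^\perp$. Thus the minimum weight of a nontrivial $X$-logical is $d_X = \min\{w_\mathrm{H}(e) : e \in \mathcal{C}_x \setminus \mathcal{C}_z^\perp\}$, matching the statement. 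The $Z$-case is entirely symmetric and yields $d_Z = \min\{w_\mathrm{H}(e) : e \in \mathcal{C}_z \setminus \mathcal{C}_x^\perp\}$.

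There is no real obstacle: the proof is a bookkeeping exercise once the correct correspondence between the two classical codes and the two stabilizer types is fixed. The one point that deserves a sanity check is the claim that asymmetric error-correction is governed by the pure-$X$ and pure-$Z$ distances separately, rather than by a single minimum-weight logical operator, which follows from the fact that in a CSS code every generator is pure $X$ or pure $Z$, so detection of a bit-flip pattern depends only on $H_z$ and detection of a phase-flip pattern only on $H_x$.
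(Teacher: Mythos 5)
Your proof is correct and is the standard stabilizer-formalism argument for the CSS construction: commutation of the mixed generator pairs from $\mathcal{C}_z^\perp \subseteq \mathcal{C}_x$, counting $2n-k_x-k_z$ independent generators to get $k = k_x+k_z-n$, and identifying the pure-$X$ (resp.\ pure-$Z$) logicals with $\mathcal{C}_x \setminus \mathcal{C}_z^\perp$ (resp.\ $\mathcal{C}_z \setminus \mathcal{C}_x^\perp$). The paper itself states this lemma without proof, citing the reference, so there is no internal argument to compare against; the closest analogue in the text is the chain-complex example for CSS codes, where your characterization of the logicals appears verbatim in homological dress as $H_1(C)=\ker\partial_1/\image\,\partial_2$ and $d_\mathrm{Z}=\xi_1$, $d_\mathrm{X}=\zeta_1$. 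Your version is the elementary one and is entirely adequate.
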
 

Similarly to classical codes, the chain complex formalism can be used to describe \ac{CSS} quantum codes \cite{BraSerHas:14, HasMatHaa:21, StrArmBer:22, TilZem:13, ZenWeiPry:19}. 
The following example illustrates this interpretation in detail.

\begin{example} [\emph{Topological interpretation of quantum CSS codes}]
In general, a \ac{CSS} code corresponds to a three terms chain complex 
\begin{equation}
\label{eq:csscode}
C = C_{2} \xrightarrow{\partial_{2}} C_{1} \xrightarrow{\partial_{1}} C_{0}
\end{equation}
where $\partial_{2} = H^{\top}_Z \in \mathbb{F}^{n \times r_z}_2$, $\partial_{1} = H_X \in \mathbb{F}^{r_x \times n}_2$, $r_z \geq \rank(H_Z) = n - k_z$, $r_x \geq \rank(H_X) = n - k_x$, $C_2 = \mathbb{F}^{r_{z}}_2$, $C_1 = \mathbb{F}^n_2$, and $C_0 =\mathbb{F}^{r_{x}}_2$.
In this case, $H_1(C) = \ker \partial_1 / \image\, \partial_2$, $\xi_1 =\min \left\{ w_\mathrm{H}(\V{c}) \,|\, \V{c} \in \ker \partial_1 \setminus \image\, \partial_2 \right\}$, $H^1(C) = \ker \partial^\top_2 / \image\, \partial^\top_1$ and $\zeta_1 =\min \left\{ w_\mathrm{H}(\V{c}) \,|\, \V{c} \in \ker \partial^\top_2 \setminus \image\, \partial^\top_1 \right\}$. 
As regards the code parameters, we have by definition that $n = \dim( C_1 )$, $\ker \partial_1 \subseteq \mathbb{F}_2^n$ is the space of the codewords on which act only $\M{X}$ parity checks $(\mathcal C_x)$, $\ker \partial^\top_2 \subseteq \mathbb{F}_2^n$ is the space of the codewords checked by $\M{Z}$ stabilizers $(\mathcal C_z)$, and $k = \dim( H_1 ) = \dim( H^1 )$. In addition, $\image\, \partial_1$ is the space of the $\M{X}$ generators and $\image\, \partial^\top_2$ is the space of the $\M{Z}$ stabilizers. 
The code minimum distance are $d_X = \zeta_1$ and $d_Z = \xi_1$, corresponding to the minimum weight of a nontrivial representative of $H^1$ and $H_1$. 
Note that, if we ensure $\partial_i \partial_{i+1} = 0$ by construction, then $\mathcal C_x^\perp \subseteq \mathcal C_z$. Hence, the chain stands for a valid CSS code. On the other hand, if we choose codes such that $\mathcal C_x^\perp \subseteq \mathcal C_z$ we obtain a valid chain complex. This proves that this structure can be used to represent any CSS code. \\
\end{example}

To construct a chain complex representing a \ac{CSS} code it is common to use double complexes of a total complex. For any two chain complexes $C$ and $D$, of length $M$ and $N$, it is possible to define the \textit{double complex} $C~\boxtimes~D$ as \cite{BreNikEbe:21a, BreNikEbe:21b}
\begin{equation}
\label{eq:doublec}
\left(C \boxtimes D\right)_{p,q} = C_p \otimes D_q
\end{equation}
where $p = 0,1,...,M$ and $q = 0,1,...,N$. In this construction we have two types of boundary maps: $\partial^{v}_i = \partial^{C}_i \otimes I^D$ and  $\partial^{h}_i = I^{C} \otimes \partial^{D}_i$, such that $\partial^v_i \partial^v_{i+1} = 0$, $\partial^h_i \partial^h_{i+1} = 0$ and $\partial^v_i \partial^h_j = \partial^h_j \partial^v_i$.

We can collect vector spaces of equal dimensions by summing along the diagonals, in order to obtain the \textit{total complex} 
\begin{equation}
\label{eq:totalc}
 E_n = Tot\left(C \boxtimes D\right)_n = \bigoplus_{p+q=n} C_p \otimes D_q 
\end{equation}
where $n = 0,1,...,N + M$.
The resulting boundary maps are $\partial^E = \partial^v \oplus \partial^h$. Finally, we can obtain a new chain complex, called \textit{tensor product complex}, starting from $C$ and $D$ as
\begin{equation}
\label{eq:tensprodc}
E = C \otimes D = Tot\left(C \boxtimes D\right).
\end{equation}
Moreover, the Künneth formula gives a method to compute the homology of a tensor product complex from the homology of the original chains
\begin{equation}
\label{eq:kun}
H_n \left(C \otimes D\right) \cong \bigoplus_{p+q=n} H_p(C) \otimes H_q(D).
\end{equation}

In the following, we will use these concepts to describe the cylindrical and M\"{o}bius codes.

\section{Cylindrical and M\"{o}bius Codes}
\label{sec:CylMob}

\begin{figure*}[t]
	\centering
	\includegraphics[width = \textwidth]{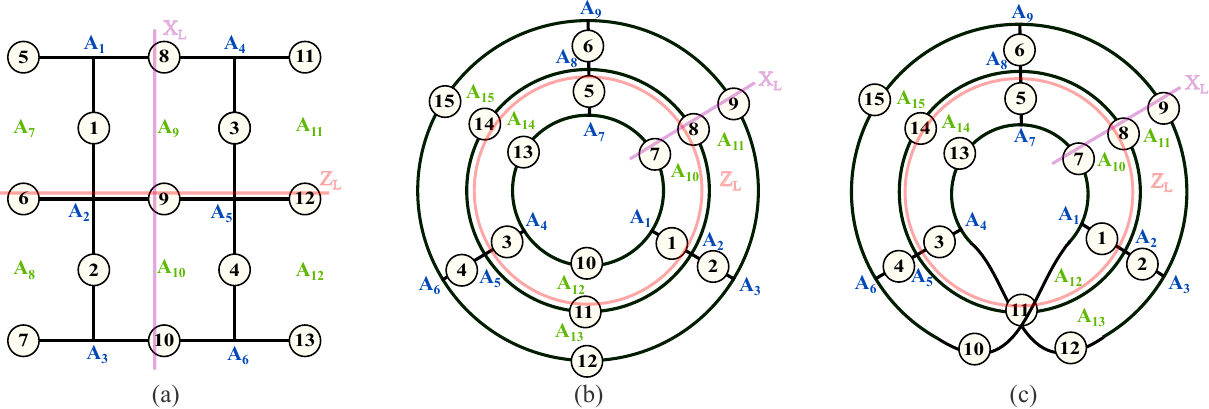}
	\caption{ (a) $[[ 13,1,3 ]]$ surface code. Data qubits are depicted as circles, blue ancillas represent $\M{Z}$ stabilizers while red ancillas stand for $\M{X}$ stabilizers. Examples of $\mathbb{\M{X}_L}$ and $\mathbb{\M{Z}_L}$ logical operators are drawn on the lattice. (b) $[[ 15,1,3 ]]$ cylindrical code. (c) $[[ 15,1,3 ]]$ M\"{o}bius code.} 
	\label{Fig:cylinder_surface}
\end{figure*}

\subsection{Cylindrical Codes: Design using Topology} 

As shown in section~\ref{sec:QLDPC_CC}, it is possible to describe a quantum \ac{CSS} code using a $3$-term chain.
Here we report how to construct \ac{CSS} codes starting from $2$-term chain complexes representing classical linear binary codes.
In particular, we detail the construction of the more general family of hypergraph product codes~\cite{TilZem:13}, within which the cylindrical code is included.

Let us denote with $C$ and $F$ their respective chain complexes
\begin{align}
\label{eq:C}
    C &= C_1 \xrightarrow{H_C} C_0\\
    F &= F_1 \xrightarrow{H_F} F_0
\end{align}
where $H_C \in \mathbb{F}_2^{r_c \times n_c}$ and $H_F \in \mathbb{F}_2^{r_f \times n_f}$ are the code parity check matrices.
At this point, we take the dual of $F$ obtaining the cochain complex
\begin{align}
    F^* &= F^*_0 \xrightarrow{H^*_F}  F^*_1 \cong F_0 \xrightarrow{H^\top_F} F_1\,.
\end{align}
Considering the isomorphism between chain and cochain, we rewrite $F^*$ as a chain complex $D$, resulting in
\begin{align}
\label{eq:D}
    D &= D_1 \xrightarrow{H^\top_F} D_0\,.
\end{align}
Having $C$ and $D$ representing our two initial codes, we construct the double complex according to \eqref{eq:doublec} as
\[
\begin{tikzcd}[row sep=1.2cm,column sep=0.8cm]
C_1 \otimes D_1  \arrow[r,"I_{n_c} \otimes H^\top_F"] \arrow[d, "H_C \otimes I_{r_f} "] & C_1 \otimes D_0   \arrow[d, "H_C \otimes I_{n_f} "] \\
C_0 \otimes D_1 \arrow[r,"I_{r_c}\otimes H^\top_F"] & C_0 \otimes D_0 .
\end{tikzcd}
\]
The tensor product complex $E$ assumes the form
\begin{equation}
\label{eq:CSSChainConstructed}
 E = \underbracket{C_1 \otimes D_1}_{E_2} \xrightarrow{\partial^E_2} \underbracket{C_0 \otimes D_1 \oplus C_1 \otimes D_0}_{E_1} \xrightarrow{\partial^E_1} \underbracket{C_0 \otimes D_0}_{E_0}
\end{equation}
where
\begin{align}
\label{eq:tensprodc3}
\partial^E_2 &= H^{\top}_Z = \left( \begin{array}{c} H_C \otimes  I_{r_f} \\ I_{n_c} \otimes H^{\top}_F \end{array} \right) \\
\label{eq:tensprodc4}
\partial^E_1 &= H_X = \left( I_{r_c} \otimes H^{\top}_F  | \; H_C \otimes  I_{n_f} \right).
\end{align}
Since a tensor product complex is a complex chain, we have a valid CSS by construction (i.e., $\partial_1^{E} \partial_2^{E}$). We can also verify that
\begin{align}
\partial^E_1 \partial^E_2 &= H_X H^{\top}_Z = \left( H_C \otimes H^\top_F \right) + \left( H_C \otimes H^\top_F \right) = \M{0}.
\end{align}
Finally, adopting the following convention for the final \ac{CSS} generator matrix 
\begin{align}
    H = \begin{pmatrix}
        H_X & 0 \\
        0 & H_Z \\
    \end{pmatrix}
\end{align}
we obtain that ancillas with indexes from $1$ to $r_c n_f$ are site generators ($\M{X}$ generators), while ancillas with indexes from $r_c n_f + 1$ to $r_c n_f + n_c r_f$ are plaquette generators ($\M{Z}$ generators). 
This notation will be consistently used throughout the paper when providing examples.

We use the developed topological framework to describe and evaluate the characteristics of the cylindrical code. 
To glue the boundaries of a surface code in only one direction and obtain a structure homeomorphic to a cylinder (i.e., an annulus in 2D), we choose two repetition codes, one with full rank parity check matrix and one with a square parity check matrix.
Specifically, the complex chain $C$ represents a repetition code $[L,1,L]$ with $L$ parity checks. Its homologies are $H_1(C) \cong \mathbb{F}_2$ and $H_0(C) \cong \mathbb{F}_2$, since one of the checks is linearly dependent. Furthermore, the complex chain $F$ is also a repetition code $[L,1,L]$, but with $L-1$ parity checks. In this case, the code has full rank parity check matrix and the homologies are  $H_1(F) \cong \mathbb{F}_2$ and $H_0(F) \cong 0$, since all the checks are linearly independent. Using~\eqref{eq:kun} it is straightforward to find the number of logical qubits encoded by the tensor product code: $k = \dim H_1(E) = \dim(H_0(C) \otimes H_1(D) \oplus H_1(C) \otimes H_0(D)) = 1 \cdot 1 + 1 \cdot 0 = 1$. Moreover, $n = \dim E_1 = \dim(C_0 \otimes D_1 \oplus C_1 \otimes D_0) = L \cdot (L-1) + L \cdot L $, while the distance of the code is still $L$. The resulting \ac{CSS} code has parameters $[[ L^2 + L \cdot (L-1), 1, L]]$.

For example, using $L=3$ and 
\begin{align}
    H_\mathrm{C} = \begin{pmatrix}
        1 & 1 & 0\\
        0 & 1 & 1\\
        1 & 0 & 1\\
    \end{pmatrix} \quad H_\mathrm{F} = \begin{pmatrix}
        1 & 1 & 0\\
        0 & 1 & 1\\
    \end{pmatrix}\notag
\end{align}
we obtain the generators of the $[[15,1,3]]$ cylindrical code through \eqref{eq:tensprodc3} and \eqref{eq:tensprodc4} as 

\begin{equation*}
\small
\arraycolsep=1.6pt
\begin{array}{lllllll} 
\M{G}_1 &= \mathbf{X}_1 \mathbf{X}_7 \mathbf{X}_{10} &
\M{G}_2 &= \mathbf{X}_1 \mathbf{X}_2 \mathbf{X}_8 \mathbf{X}_{11} &
\M{G}_3 &= \mathbf{X}_2 \mathbf{X}_9 \mathbf{X}_{12} &\\
\M{G}_4 &= \mathbf{X}_3 \mathbf{X}_{10} \mathbf{Z}_{13} &
\M{G}_5 &= \mathbf{X}_3 \mathbf{X}_4 \mathbf{X}_{11} \mathbf{X}_{14}  &
\M{G}_6 &= \mathbf{X}_4 \mathbf{X}_{12} \mathbf{X}_{15} \\
\M{G}_7 &= \mathbf{X}_5 \mathbf{X}_7 \mathbf{X}_{13}  &
\M{G}_8 &= \mathbf{X}_5 \mathbf{X}_6 \mathbf{X}_8 \mathbf{X}_{14} &
\M{G}_9 &= \mathbf{X}_6 \mathbf{X}_9 \mathbf{X}_{15} &\\
\M{G}_{10} &= \mathbf{Z}_1 \mathbf{Z}_{5} \mathbf{Z}_{7} \mathbf{Z}_{8}&
\M{G}_{11} &= \mathbf{Z}_2 \mathbf{Z}_{6} \mathbf{Z}_{8} \mathbf{Z}_{9} &
\M{G}_{12} &= \mathbf{Z}_{1} \mathbf{Z}_3 \mathbf{Z}_{10} \mathbf{Z}_{11} \\
\M{G}_{13} &= \mathbf{Z}_{2} \mathbf{Z}_{4} \mathbf{Z}_{11} \mathbf{Z}_{12} &
\M{G}_{14} &= \mathbf{Z}_{3} \mathbf{Z}_{5} \mathbf{Z}_{13} \mathbf{Z}_{14}&
\M{G}_{15} &= \mathbf{Z}_{4} \mathbf{Z}_{6} \mathbf{Z}_{14} \mathbf{Z}_{15} \,. &
\end{array} 
\normalsize
\end{equation*}

%
The structure of cylindrical codes can be visualized by gluing together two plaquette (or site) boundaries of a surface code (see Fig.~\ref{Fig:cylinder_surface}a), including $d - 1$ additional qubits, obtaining an annulus (see Fig.~\ref{Fig:cylinder_surface}b). 
Note that, the qubit indexing of the surface code in Fig.~\ref{Fig:cylinder_surface}a is obtained with 
\begin{align}
    H_\mathrm{C} = H_\mathrm{F} = \begin{pmatrix}
        1 & 1 & 0\\
        0 & 1 & 1\\
    \end{pmatrix}.\notag
\end{align}
It is important to note that cylindrical codes are still planar and their generators require only local connectivity, as in surface codes.

\subsection{M\"{o}bius Codes: Design using Topology} 

A possible method to close in a loop a surface code, although admitting $2 (d-1)$ non-local measurements, is to attach the boundary together with a twist. Due to the particular construction we name these codes as M\"{o}bius codes. An example is depicted in Fig.~\ref{Fig:cylinder_surface}c. Before proceeding with the construction, let us recall some concepts from the \emph{fiber boundle} literature~\cite{HasMatHaa:21, BreNikEbe:21a}.
In fact, the M\"{o}bius codes are particular cases of the family of fiber bundle codes~\cite{HasMatHaa:21}.

Let us consider the complex chains $C$ and $D$ in \eqref{eq:C} and \eqref{eq:D}. 
Denote by $\text{Aut}(D)$ the finite group of linear automorphisms of the complex $D$, i.e. linear automorphisms of $D_1$ and $D_0$ that commute with the differential $H_F^\top$. 
Moreover, denote basis vectors of $C_i$ by $c^i$ and write $c^0 \in H_C \, c^1$ if $c^0$ appears with a nonzero coefficient in $H_C c^1$. 
It is possible to twist the vertical differentials in the double complex $C \boxtimes D$ by an automorphism $\phi(c^1,c^0) \in \text{Aut}(D)$ to every pair $(c^1,c^0)$ such that $c^0 \in H_C \, c^1$. 
The resulting fiber boundle double complex $C~\boxtimes_{\phi}~D$ is 
\[
    \begin{tikzcd}[row sep=1.2cm,column sep=0.8cm]
C_1 \otimes D_1  \arrow[r,"I_{n_c} \otimes H^\top_F"]  \arrow[d,"\partial_{\phi_1}"] & C_1 \otimes D_0   \arrow[d,"\partial_{\phi_0}"] \\
C_0 \otimes D_1 \arrow[r,"I_{r_c} \otimes H^\top_F"]  & C_0 \otimes D_0 .
\end{tikzcd}
\]
where $\partial_{\phi_i}(c^1 \otimes d^i) = \sum_{c^0 \in H_C \, c^1} c^0 \otimes  \phi_i(c^1,c^0)(d^i)$ and $\phi_i(c^1,c^0) \in \text{Aut}(D_i)$. 

Aiming to construct a topological code having a M\"{o}bius structure, we define as: 
$i)$ $P_\mathrm{plaq}$, a $(L-1) \times (L-1)$ matrix with all elements to zero, except for the elements in the secondary diagonal (also called the anti-diagonal or counter-diagonal), which are ones;
$ii)$ $P_\mathrm{site}$, a $L \times L$ matrix with all elements to zero, except for the elements in the secondary diagonal, which are ones; 
$iii)$ $S_x$, a $L \times L$ matrix with all elements to zero, except for the element in position $(x, x)$, which is a one\footnote{Here, we assume that both the rows and columns are indexed starting from one.}.
The matrices $P_\mathrm{plaq}$ and $P_\mathrm{site}$ are permutation matrices we use to produce the twist of the plaquettes and sites, respectively. 
The matrix $S_x$ is used to select a single location to perform the cut of the cylindrical structure, enabling the twist operation.
Then, a M\"{o}bius code is derived from the fiber boundle complex  $C~\boxtimes_{\phi}~D$ imposing 
\begin{equation}
\begin{aligned}
    \partial_{\phi_0} &= (H_C - S_{(L+1)/2}) \otimes I_{n_f} + S_{(L+1)/2} \otimes P_\mathrm{site} \\
    \partial_{\phi_1} &= (H_C - S_{(L+1)/2}) \otimes I_{r_f} + S_{(L+1)/2} \otimes P_\mathrm{plaq} \,.\\
\end{aligned}
\end{equation}
Finally, the construction proceed as usual, obtaining 
\begin{align}
\label{eq:tensprodc3Fiber}
\partial^E_2 &= H^{\top}_Z = \left( \begin{array}{c} \partial_{\phi_1} \\ I_{n_c} \otimes H^{\top}_F \end{array} \right) \\
\label{eq:tensprodc4Fiber}
\partial^E_1 &= H_X = \left( I_{r_c} \otimes H^{\top}_F  | \; \partial_{\phi_0} \right).
\end{align}

Note that, the defined $P_\mathrm{plaq}$, $P_\mathrm{site}$, and $S_x$ matrices correctly construct a M\"{o}bius code if $H_C$ and $H_F$ are defined accordingly.
In particular, take the binary vector $\V{v}$ of length $L$ in which the first two entries are ones and the other entries are zeros. Therefore, $H_C$ and $H_F$ should be constructed having as a first row the vector $\V{v}$, and the other rows are obtained performing a cyclic shift to the right of the above rows.
Then, $H_C$ and $H_F$ should be constructed with the first row being the vector $\V{v}$, the second row obtained by performing a cyclic shift to the right on the above row, and so on.
As an example, using $L=3$, we can construct the $[[15,1,3]]$ M\"{o}bius code adopting
\begin{align}
    H_\mathrm{C} = \begin{pmatrix}
        1 & 1 & 0\\
        0 & 1 & 1\\
        1 & 0 & 1\\
    \end{pmatrix} \quad H_\mathrm{F} = \begin{pmatrix}
        1 & 1 & 0\\
        0 & 1 & 1\\
    \end{pmatrix}\notag
\end{align}
similarly to the cylindrical code.
Then, considering that

\begin{align}
    P_\mathrm{plaq} = \begin{pmatrix}
        0 & 1 \\
        1 & 0 \\
    \end{pmatrix}\quad 
    P_\mathrm{site} = \begin{pmatrix}
        0 & 0 & 1 \\
        0 & 1 & 0 \\
        1 & 0 & 0 \\
    \end{pmatrix} \quad
    S_1 = \begin{pmatrix}
        0 & 0 & 0\\
        0 & 1 & 0\\
        0 & 0 & 0\\
    \end{pmatrix} \notag
\end{align}
we obtain from \eqref{eq:tensprodc3Fiber} and \eqref{eq:tensprodc4Fiber} the same generators of the cylindrical code, except for 

\begin{equation*}
\small
\arraycolsep=1.6pt
\begin{array}{lllllll} 
\M{G}_4 &= \mathbf{X}_3 \mathbf{X}_{12} \mathbf{Z}_{13} &
\M{G}_6 &= \mathbf{X}_4 \mathbf{X}_{10} \mathbf{X}_{15} \\
\M{G}_{12} &= \mathbf{Z}_{1} \mathbf{Z}_{4} \mathbf{Z}_{10} \mathbf{Z}_{11} &
\M{G}_{13} &= \mathbf{Z}_{2} \mathbf{Z}_{3} \mathbf{Z}_{11} \mathbf{Z}_{12} \\
\end{array} 
\normalsize
\end{equation*}

\subsection{Cylindrical Codes: Performance Analysis}
\label{subsec:CylPerfAnalysis}

For the $[[15,1,3]]$ cylindrical code, we compute the undetectable errors \ac{WE} $L(z)$ 
\begin{align}
\label{eq:LzCyl}
    \nonumber L(z) &= 6z^3 + 18z^4 + 66z^5 + 228z^6 + 678z^7 + 1836z^8 \\ \nonumber &+ 4236z^9 + 7920z^{10} + 11274z^{11} + 11442z^{12} \\ &+ 7746z^{13} + 3132z^{14} + 570z^{15}. 
\end{align}
Moreover, for comparison purposes, we also report the expression of $L(z)$ for the $[[13,1,3]]$ surface code
\begin{align}
    \nonumber &L(z) = 6 z^3 + 24 z^4 + 75 z^5 + 240 z^6 + 648 z^7 + 1440 z^8 \\ &+ 2538z^9 + 3216z^{10} + 2634z^{11} + 1224z^{12} + 243z^{13}.  
\end{align}
\begin{figure}[t]
	\centering
	\includegraphics[width = 0.9\columnwidth ]{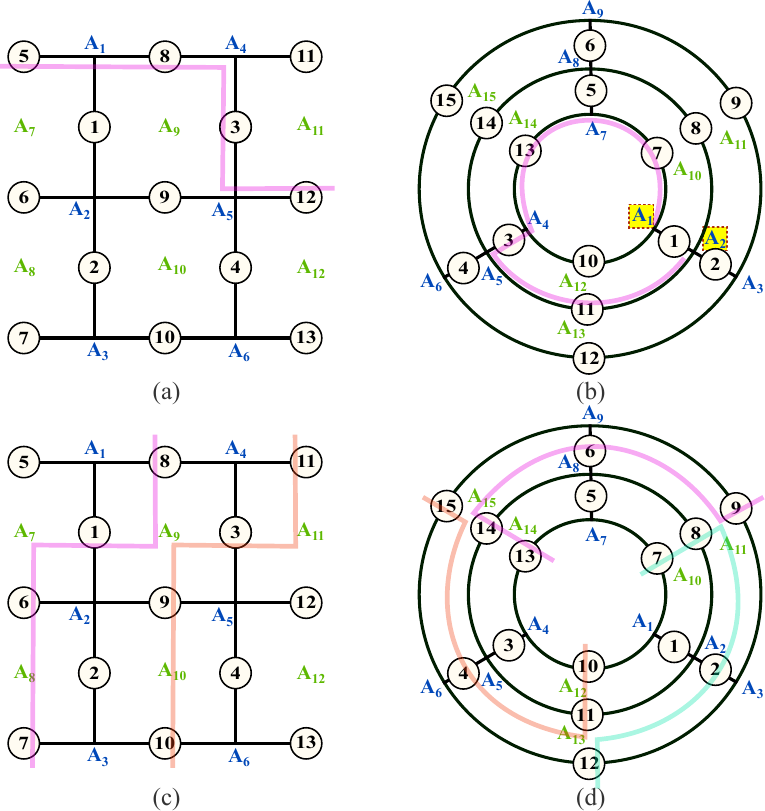}
	\caption{ Some comparison between logical operators on the surface and the cylindrical codes. The pattern highlighted in (b) is not a logical operator. In (a) and (b) error patterns are composed by $\M{Z}$ operators, while in (c) and (d) are composed by $\M{X}$ operators.} 
	\label{fig:SurfVsCylExamples}
\end{figure}

Upon closer examination of these expressions, it is evident that the cylindrical code features six fewer logical operators with weight $w=4$ than the surface code. 
Indeed, for the $[[13,1,3]]$ surface code, we find eight $\M{Z}\M{Z}\M{Z}\M{Z}$ 
logical operators of weight $w = 4$ that cross the lattice from boundary to boundary~\cite{ForValChi:23}. 
The main advantage of the $[[15,1,3]]$  cylindrical code is that these are no longer logical operators, since boundaries are periodic. 
As an example, using notation from Fig.~\ref{Fig:cylinder_surface}a, the $\mathbb{\M{Z}}_L$ logical operator $\M{Z}_3\M{Z}_{5}\M{Z}_{8}\M{Z}_{12}$ has no counterpart in the cylindrical code. 
This is due to the fact that an error of the kind $\M{Z}_3\M{Z}_{7}\M{Z}_{11}\M{Z}_{13}$ in the cylindrical code turns on both ancillas $A_1$ and $A_2$ (see Fig.~\ref{fig:SurfVsCylExamples}a and Fig.~\ref{fig:SurfVsCylExamples}b). 
Moreover, in the surface code we have two logical operators of weight $2t + 2$ of the kind $\M{Y}\M{Y}\M{Z}\M{X}$, such as $\M{Z}_1\M{Y}_5\M{X}_7\M{Z}_6$, that are no longer present in the cylindrical structure.
However, in the $[[15,1,3]]$ cylindrical code we can find four $\M{X}_L$ logical operators of weight $w = 4$ which are not present in the $[[13,1,3]]$ surface code and involve the two additional qubits.
For instance, there are two logical operators, i.e., $\M{X}_{1}\M{X}_{6}\M{X}_{7}\M{X}_{8}$ and $\M{X}_{3}\M{X}_{9}\M{X}_{10}\M{X}_{11}$, crossing the surface code from the bottom boundary to the top one, each making a single rightward turn at the upper part of the lattice (see Fig.~\ref{fig:SurfVsCylExamples}c). 
As anticipated, looking at the cylindrical code, there are three such patterns, i.e., $\M{X}_{6}\M{X}_{9}\M{X}_{13}\M{X}_{14}$, $\M{X}_{2}\M{X}_{7}\M{X}_{8}\M{X}_{12}$ and $\M{X}_{4}\M{X}_{10}\M{X}_{11}\M{X}_{15}$ (see Fig.~\ref{fig:SurfVsCylExamples}d).
Since a path crossing four qubits from bottom to top can have the turn placed either at the upper or lower part, and it can be directed to the right or left, there is a difference of four logical operators of this type.

We now discuss the ad-hoc logical operator analysis, a procedure applicable to small codes to obtain exact asymptotic performance. 
For larger codes, we defer to Section~\ref{subsec:UpperBounds}, where upper bounds are derived using the scalable aspects of the ad-hoc reasoning presented here.
To compute the value of $\beta_2$ of the $[[15,1,3]]$ cylindrical code, we focus on the logical operators of weight $w=3$ and $w=4$, in order to find the fraction of faulty errors of weight $j=2$. 
We can stop at $w=4$ because when $j=2$ errors are introduced by the quantum channel, since at least a codeword is at distance $j$ (the correct one), a decoder choosing the minimum weight pattern could fail introducing at most $j$ errors.

From~\eqref{eq:LzCyl}, we have six logical operators with $w=3$.
Specifically, we have three $\M{X}\M{X}\M{X}$ and three $\M{Z}\M{Z}\M{Z}$ logical operators (two examples can be seen in Fig.~\ref{Fig:cylinder_surface}b), all of which exhibit channel errors triggering a failure when composed of two $\M{X}$ or two $\M{Z}$ errors. 
Then, for each logical operator, there exist three distinct ways to distribute two errors among the three available locations, as given by $\binom{3}{2}=3$. 
Next, we need to take into account that also $\M{Y}$ errors contain both $\M{X}$ and $\M{Z}$.
When the logical error under examination is $\M{X}_L$ ($\M{Z}_L$), the error patterns that could generate it are in the form of $\M{X}\M{X}$ ($\M{Z}\M{Z}$), $\M{X}\M{Y}$ ($\M{Z}\M{Y}$), 
and $\M{Y}\M{Y}$. 
This results in a total of $\binom{2}{0} + \binom{2}{1} + \binom{2}{2} = 4$ combinations.

Moving to logical operators of weight $w=4$, from~\eqref{eq:LzCyl}, we observe that we have to search for $L_4 = 18$ logical operators.
Among them, twelve are composed by $\M{X}\M{X}\M{X}\M{X}$, and the remaining six are in the form $\M{Y}\M{Y}\M{X}\M{Z}$. 
Regarding the $\M{Y}\M{Y}\M{X}\M{Z}$ logical operators, we have a similar behaviour observed for the surface code, i.e.,  after \ac{MWPM} decoding, we are always left with a logical operator with three $\M{Z}$. 
Since all the possible errors arising from this kind of logical operators have been already accounted for, we discard them. 
In other words, these logical operators cannot be triggered when only two errors are introduced by the channel and the decoder chooses for the minimum distance error pattern.
To help the visualization of such patterns let us make another example. 
Consider the $\M{X}_1\M{Y}_{7}\M{Y}_{10}\M{Z}_{13}$ operator. 
Focusing on channel errors of weight $j=2$, this logical operator could be triggered by a $\M{Y}_7\M{Y}_{10}$, since the \ac{MWPM} decoder applies an $\M{X}_1$ and a $\M{Z}_{13}$. 
However, the resulting $\M{X}_1\M{X}_{7}\M{X}_{10}$ is a stabilizer generator, and we are left with the logical error $\M{Z}_L=\M{Z}_7\M{Z}_{10}\M{Z}_{13}$.
Since we have already accounted for the case in which $\M{Z}_L=\M{Z}_7\M{Z}_{10}\M{Z}_{13}$ is generated by $\M{Y}_7\M{Y}_{10}$ when considering logical operators of weight $w=3$, we correctly discard it.

Regarding the twelve $\M{X}\M{X}\M{X}\M{X}$ logical operators, we have already observed three of them in Fig.~\ref{fig:SurfVsCylExamples}d.
In fact, these error patterns can be derived enumerating all possible paths traversing from the inside to the outside boundary of the lattice, performing a single turn.
This lead to a total of four paths starting from the same inner qubit, that makes twelve when considering that we have three inner qubits (i.e., qubit $7$, $10$, and $13$).
Then, we observe that there is a symmetry in the structure of these logical operators, and for this reason we can focus on the pair $\M{X}_L=\M{X}_{6}\M{X}_{9}\M{X}_{13}\M{X}_{14}$ and $\M{X}_L=\M{X}_{5}\M{X}_{8}\M{X}_{9}\M{X}_{13}$ to extrapolate a general behaviour.
Since they are composed only by Pauli $\M{X}$ errors, we have $\binom{4}{2}=6$ possible error patterns for each logical operator in which the $j=2$ channel errors could have occurred. 
In general, when considering the pair, we have twelve possible error locations: $i)$ two of them are always decoded because the decoder applies a correction leading to a stabilizer; $ii)$ two of them turn on the same ancilla of an operator of weight $w = 3$; $iii)$ four of them sharing in pair the same syndromes; $iv)$ four of them sharing the same syndrome.
To aid visualization, let us follow-up with our example.
Considering $\M{X}_L=\M{X}_6\M{X}_{9}\M{X}_{13}\M{X}_{14}$, the error pattern described by $i)$ is $\M{X}_{6}\M{X}_{9}$, turning on ancilla $A_{15}$.
In this case, the decoder applies $\M{X}_{15}$ and it leads to a stabilizer (i.e., successful decoding).
The error pattern described by $ii)$ is the combination $\M{X}_{13}\M{X}_{14}$ which could be ignored because it generates the logical operator $\M{X}_L=\M{X}_{13}\M{X}_{14}\M{X}_{15}$ of weight $w=3$.
The error pattern pair described by $iii)$ is composed by $\M{X}_{6}\M{X}_{13}$ and $\M{X}_{9}\M{X}_{14}$, producing the same error syndrome, meaning that at least one of them could be corrected while the other inevitably fails.
Lastly, regarding $iv)$, we have that the two operators $\M{X}_L=\M{X}_6\M{X}_{9}\M{X}_{13}\M{X}_{14}$ and $\M{X}_L=\M{X}_{5}\M{X}_{8}\M{X}_{9}\M{X}_{13}$ share $\M{X}_{9}\M{X}_{13}$, resulting in the same syndrome if this error pattern occurs.
Moreover, the errors $\M{X}_{6}\M{X}_{14}$, on the first logical operator, and $\M{X}_{5}\M{X}_{8}$, on the second logical operator, also give the same syndrome.
This means that only one of the total four possible error patterns (note that two are actually the same pattern) has to be counted.

Therefore, we conclude that each of the $6$ pairs of logical operators with $w=4$ can be caused only by $3$ error patterns (two due to $iii)$ and one due to $iv)$). The obtained $\beta_2$, over a depolarizing channel, is  
\begin{align}
\label{eq:computeBeta2c}
    \beta_2 = 1 - \frac{6\cdot3\cdot4 + 6\cdot3\cdot4 }{\binom{15}{2}\cdot 3^2} = 0.85.
\end{align}

%
%
	\label{Fig:cylinder_logicals}
%
Moving to a phase flip channel, we need to count only the patterns in the form $\M{Z}\M{Z}\M{Z}\M{Z}$. 
Since the cylindrical code has no logical operators of such a type with weight $w=4$, only the three $\M{Z}\M{Z}\M{Z}$ operators of weight $w=3$ remain.
Furthermore, for each logical operator, there exist three distinct ways to distribute two errors among the three available locations. 
This results, for the $[[15,1,3]]$ cylindrical code over a phase flip channel, in
\begin{align}
\label{eq:computeBeta2p}
    \beta_2 = 1 - \frac{3\cdot3 }{\binom{15}{2}} = 0.91\,.
\end{align} 

\subsection{M\"{o}bius Codes: Performance Analysis}
\label{subsec:MobPerfAnalysis}

Employing the procedure described above, we compute the undetectable error \ac{WE} for the $[[15,1,3]]$ M\"{o}bius code, resulting in
\begin{align}
    \nonumber L(z) &= 4z^3 + 18z^4 + 60z^5 + 220z^6 + 666z^7 + 1836z^8 \\ \nonumber &+ 4288z^9 + 7968z^{10} + 11280z^{11} + 11378z^{12} \\ &+ 7668z^{13} + 3156z^{14} + 610z^{15}. 
\end{align}
Differntly from the cylindrical code we highlight that only four logical operators with a weight of $w=3$ are present. 
Taking as example Fig.~\ref{Fig:cylinder_surface}c, the three $\M{X}\M{X}\M{X}$ logical operators are the same of the cylindrical code in Fig.~\ref{Fig:cylinder_surface}b. 
On the other hand, only one  $\M{Z}\M{Z}\M{Z}$ logical operator of the cylindrical code is still present, i.e., $\M{Z}_{8}\M{Z}_{11}\M{Z}_{14}$. 
From this, we note that, if again the logical operators of weight $w = 4$ have no $\M{Z}\M{Z}\M{Z}\M{Z}$ logical operators, this could greatly impact the performance of the code over asymmetric channels. 

Regarding logical operators of weight $w=4$, we have that all $L_4 = 18$ of them are in the form $\M{X}\M{X}\M{X}\M{X}$ Pauli operators. 
Within them, there are twelve operators that cross the lattice vertically (e.g., $\M{X}_{6}\M{X}_{9}\M{X}_{13}\M{X}_{14}$) and six that traverse it horizontally (e.g., $\M{X}_{2}\M{X}_{3}\M{X}_{6}\M{X}_{14}$). 
We note that a vertical logical operator can arise solely from one specific $\M{X}\M{X}$ error pattern among the $\binom{4}{2}$ possible combinations. 
Indeed, let us consider the  $\M{X}_{6}\M{X}_{9}\M{X}_{13}\M{X}_{14}$ logical operator.
Since this code is degenerate, the channel errors $\M{X}_{6}\M{X}_{9}$ and $\M{X}_{9}\M{X}_{13}$ are corrected by a \ac{MWPM} decoder realizing the stabilizers $\M{X}_{6}\M{X}_{9}\M{X}_{15}$ and $\M{X}_{2}\M{X}_{3}\M{X}_{9}\M{X}_{13}$, respectively. 
To be fair, here the error pattern $\M{X}_{9}\M{X}_{13}$ turns on the ancillas $A_{11}$ and $A_{14}$, and we are assuming that the \ac{MWPM} decide for the correction $\M{X}_{2}\M{X}_{3}$. 
Furthermore, the occurrence of the error $\M{X}_{13}\M{X}_{14}$ results in the logical operator $\M{X}_{13}\M{X}_{14}\M{X}_{15}$, a contribution that is taken into consideration when analyzing operators with a weight of $w = 3$. 
The channel error $\M{X}_{6}\M{X}_{14}$, turning on the ancillas $A_{11}$ and $A_{14}$, causes the horizontal logical operator $\M{X}_{2}\M{X}_{4}\M{X}_{6}\M{X}_{14}$ due to our previous assumption. 
The horizontal operators will be included later in the counting, therefore, we can discard it for this counting.
Moreover, the error patterns $\M{X}_{6}\M{X}_{13}$ and $\M{X}_{9}\M{X}_{14}$ give rise to identical syndromes, meaning that the \ac{MWPM} decoder consistently corrects one of these patterns, while the other induces the logical operator. 
In conclusion, each of the  twelve vertical logical operators can be caused only by a single error pattern of weight $t + 1$. 
Let us shift our attention to the horizontal logical operators, such as $\M{X}_2\M{X}_{3}\M{X}_{6}\M{X}_{14}$.
In this scenario, among the $\binom{4}{2} = 6$ error patterns with a weight of $j=2$, three pairs of error patterns result in identical syndromes. 
The decoder successfully corrects half of these patterns, i.e., for each of the six horizontal logical operator there are $\binom{4}{2}/2 = 3$ pattern that trigger an error. 
According to this reasoning, the value of $\beta_2$, for the $[[15,1,3]]$ M\"{o}bius code over depolarizing, we have
\begin{align}
\label{eq:computeBeta2c}
    \beta_2 = 1 - \frac{4\cdot3\cdot4 + 6\cdot3\cdot4 + 12 \cdot 4 }{\binom{15}{2}\cdot 3^2} = 0.82.
\end{align}
Since no logical operators in the form $\M{Z}\M{Z}\M{Z}\M{Z}$ exists and only one in the form $\M{Z}\M{Z}\M{Z}$ is present, for the $[[15,1,3]]$ M\"{o}bius code over phase flip channel, we have
\begin{align}
\label{eq:computeBeta2pMobius}
    \beta_2 = 1 - \frac{1\cdot3 }{\binom{15}{2}} = 0.97.
\end{align}
This shows us that M\"{o}bius codes, employing the same number of qubits of cylindrical codes, exhibit superior performance over asymmetric channels.

\subsection{Analytical Upper Bounds}
\label{subsec:UpperBounds}

In this section, we provide an upper bound on the logical qubit error rate for both the cylindrical and the M\"{o}bius codes, without limiting the code distance $d$.

In order to derive a bound valid for any $d$, we require a closed-form expression for the $L(z)$ coefficients related to the logical operators of weight $2t + 1$ and $2t + 2$.
Observing the topological structure of the cylindrical code, we have $2d$ logical operators of weight $2t + 1$: $d$ $\M{Z}$ logical operators crossing horizontally, and $d$ $\M{X}$ logical operators crossing vertically across the lattice.
Two examples of these are depicted in Fig.~\ref{Fig:cylinder_surface}b.
Moving to logical operators of weight $2t + 2$, we search for the ones composed only by $\M{X}$.
These logical operators traverse the lattice from the outer ring of the cylindrical structure to the inner ring, performing a single turn along the path.
An example of this, is given by $\M{X}_L = \M{X}_{5}\M{X}_8\M{X}_9\M{X}_{13}$ in Fig.~\ref{Fig:cylinder_surface}b.
In general, there are $d$ different starting points (i.e., qubits in the outer ring) and $d-1$ possible locations where the turn can occur, either to the right or to the left.
Then, the logical operators composed only by $\M{X}$ Pauli are $2d(d-1)$.
Additionally, there are $2d$ logical operators in the form $\M{Y}\M{Y}\M{X}\M{Z}\dots\M{Z}$, with $2t-1$ Pauli $\M{Z}$.
In conclusion, we have that $L_{2t + 1} = 2d$ and $L_{2t + 2} = 2d^2$.

Regarding the M\"{o}bius code, we have $d + 1$ logical operators of weight $2t + 1$: one $\M{Z}_L$ operator crossing horizontally, and $d$ $\M{X}_L$ operators crossing vertically across the lattice (e.g., see Fig.~\ref{Fig:cylinder_surface}c).
Regarding $\M{X}_L$ operators of weight $2t + 2$, the ones traversing horizontally the lattice are the same of the cylindrical code, i.e., $2d(d-1)$.
On the other hand, there are $d(d-1)$ $\M{X}_L$ operators of weight $2t + 2$ crossing horizontally the lattice.
Indeed, we have $d-1$ starting points (i.e., qubits connecting the outer and the inner rings) and $d$ possible locations where the turn can occur.
In conclusion, we have $L_{2t + 1} = d + 1$ and $L_{2t + 2} = 3d(d-1)$.

Furthermore, we define as $L^{X}_{w}$ the total number of $\M{X}_L$ operators composed only by $\M{X}$ operators of weight $w$.
Similarly, $L^{Z}_{w}$ is defined as the total number of $\M{Z}_L$ operators composed solely of $\M{Z}$ operators of weight $w$.
Therefore, for the cylindrical codes $L^{X}_{2t +1} = L^{Z}_{2t +1} = d$ and $L^{X}_{2t +2} = 2d(d- 1)$, while, for the M\"{o}bius codes $L^{X}_{2t +1} = d$, $L^{Z}_{2t +1} = 1$, and $L^{X}_{2t +2} = 3d(d- 1)$. 
For both codes we have that $L^{Z}_{2t +2} = 0$.


\begin{theorem}
\label{th:Bound}
    The value of $\beta_{t+1}$, for a cylindrical or M\"{o}bius code of distance $d = 2t + 1$, can be upper bounded as
    \begin{align}
    \label{eq:bound}
        \beta_{t+1} &\geq 1 - \frac{\binom{2t + 1}{t +1} L^{Z}_{2t +1}}{\binom{n}{t +1}} \left(\frac{\pz + \py}{p}\right)^{t +1} \notag \\
        &- \frac{\binom{2t + 1}{t +1} L^{X}_{2t +1}  + \binom{2t+2}{t+1} L^{X}_{2t +2} / 2 }{\binom{n}{t +1}} \left(\frac{\px + \py}{p}\right)^{t +1}
    \end{align}   
\end{theorem}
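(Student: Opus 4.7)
The plan is to convert $1-\beta_{t+1}$ into a probability-weighted count of uncorrectable weight-$(t+1)$ error patterns and then apply a union bound over the low-weight undetectable errors enumerated in $L(z)$. Rewriting the definition from Section~\ref{sec:preliminary} by unfolding $f_j(i,\ell)$ as the ratio of the number of uncorrectable weight-$j$ errors with a given Pauli composition to the total number of such errors, one obtains
\begin{align}
\binom{n}{t+1}\,p^{t+1}(1-\beta_{t+1}) = \sum_{\M{E}\in\mathcal{F}} \pz^{\,z(\M{E})}\,\px^{\,x(\M{E})}\,\py^{\,y(\M{E})},
\end{align}
where $\mathcal{F}$ collects the weight-$(t+1)$ Pauli errors that the \ac{MWPM} decoder fails to correct and $x(\M{E}),y(\M{E}),z(\M{E})$ count the single-qubit $\M{X}$, $\M{Y}$, and $\M{Z}$ factors in $\M{E}$. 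Since \ac{MWPM} returns a minimum-weight correction and the code has distance $2t+1$, any $\M{E}\in\mathcal{F}$ combined with its correction must be an undetectable error of weight $2t+1$ or $2t+2$; higher-weight logicals cannot be reached from a weight-$(t+1)$ pattern. A union bound over such logicals, with one contribution per low-weight logical, then suffices.

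For each pure-$\M{Z}$ logical of weight $2t+1$, the triggering weight-$(t+1)$ patterns live on a $(t+1)$-subset of its support with either $\M{Z}$ or $\M{Y}$ at each chosen qubit; summing over the $\binom{2t+1}{t+1}$ position choices and the Pauli alternatives yields a contribution $\binom{2t+1}{t+1}(\pz+\py)^{t+1}$. Swapping the roles of $\M{Z}$ and $\M{X}$ gives $\binom{2t+1}{t+1}(\px+\py)^{t+1}$ per pure-$\M{X}$ logical of weight $2t+1$. For a pure-$\M{X}$ logical $\M{L}$ of weight $2t+2$, any two $(t+1)$-subsets $S$ and $\overline{S}$ that are complementary on the logical's support yield the same $\M{Z}$-stabilizer syndrome, because the product of the corresponding $\M{X}$-patterns equals $\M{L}$, which commutes with all stabilizers; the \ac{MWPM} decoder therefore resolves at least one of each complementary pair correctly, leaving at most $\binom{2t+2}{t+1}/2$ uncorrectable subsets per logical, each weighted by $(\px+\py)^{t+1}$. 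The mixed $\M{Y}\M{Y}\M{X}\M{Z}\cdots\M{Z}$ logicals of weight $2t+2$, present for cylindrical codes but not for M\"{o}bius ones, contribute no new term, since as argued in Section~\ref{subsec:CylPerfAnalysis} every weight-$(t+1)$ pattern triggering them reduces after \ac{MWPM} to a pure-$\M{Z}$ logical of weight $2t+1$ already counted.

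Summing the three per-logical contributions multiplied by $L^Z_{2t+1}$, $L^X_{2t+1}$, and $L^X_{2t+2}$ respectively, dividing by $\binom{n}{t+1}p^{t+1}$, and rearranging produces the announced inequality. The most delicate step is justifying the factor $1/2$ for the weight-$(2t+2)$ contribution once $\M{Y}$ errors on the logical's support are admitted: the complementary-pair argument depends solely on the $\M{X}$-component of the error, which determines the $\M{Z}$-stabilizer syndrome, and since $\M{Y}$ carries the same $\M{X}$-component at its position as $\M{X}$, the pairing, and hence the $1/2$ reduction, is preserved for arbitrary mixtures of $\M{X}$ and $\M{Y}$ on the logical's support.
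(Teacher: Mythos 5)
Your proposal is correct and follows essentially the same route as the paper's proof: a union bound over the undetectable errors of weight $2t+1$ and $2t+2$, with $(\pz+\py)^{t+1}$ or $(\px+\py)^{t+1}$ weighting according to the logical type, and a factor $1/2$ for the weight-$(2t+2)$ pure-$\M{X}$ logicals from syndrome-colliding pairs. In fact you make explicit several steps the paper leaves implicit — why only logicals of weight at most $2t+2$ can be reached from a weight-$(t+1)$ pattern under minimum-weight decoding, why complementary $(t+1)$-subsets of a weight-$(2t+2)$ logical share a syndrome, why the mixed $\M{Y}\M{Y}\M{X}\M{Z}\cdots\M{Z}$ operators add nothing, and why the pairing survives $\M{Y}$-admixture.
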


\begin{proof}
    Cylindrical and M\"{o}bius codes are \ac{CSS} codes, therefore, $t + 1$ errors comprising both $\M{X}$ and $\M{Z}$ Pauli operators are always corrected.
    Hence, $t + 1$ errors can cause a logical operator of weight $2t + 1$ or $2t + 2$ only if they are made of either $\M{X}$ and $\M{Y}$, or $\M{Z}$ and $\M{Y}$ Pauli operators. 
    As a consequence of this, we have that each error pattern composed by $\M{X}$ and $\M{Y}$, which triggers an $\M{X}_L$, has to be weighted by $(\px + \py)^{t + 1}$.
    For $\M{Z}_L$ operators, the weighting is $(\pz + \py)^{t + 1}$.
    Then, consider that each logical operator of weight $2t + 1$, can be caused by $\binom{2t + 1}{t + 1}$ possible error patterns of weight $t + 1$.
    Similarly, for each logical operator of weight $2t + 2$, there are $\binom{2t + 2}{t + 1}$ possible error patterns of weight $t + 1$.
    However, due to the \ac{MWPM} decoding, there are always a pair of different error patterns of weight $t+ 1$ causing the same syndrome when occurring over the same logical operator of weight $2t + 2$. 
    As an example, referring to Fig.~\ref{Fig:cylinder_surface}b, the errors $\M{X}_{5}\M{X}_{15}$ and $\M{X}_{7}\M{X}_{14}$ result in the same syndrome.
    As a consequence, only half of the $\binom{2t + 2}{t + 1}$ combinations should be counted.
    In conclusion, the enumerators in \eqref{eq:bound} are the total number of faulty error patterns comprising $t+1$ Pauli operators, weighted by their probability of occurrence.
    On the other hand, the common denominator is the total number of error patterns of weight $t+1$ that can occur over $n$ qubits, $\binom{n}{t + 1}$, weighted by its probability.
    We remark that this is an upper bound since we are not taking into account the degeneracy of the code.
    Specifically, we are not considering the overlapping of logical operators, which reduces the total number of possible error patterns given by the binomials~\cite{ForValChi:23}.
\end{proof}


\begin{corollary}
   \label{cor:AsymmetrycChannel}
   For a cylindrical or M\"{o}bius code of distance $d = 2t + 1$ over an asymmetric channel with asymmetry parameter $A = 2\pz / (p - \pz)$ and $\px = \py$, the logical error rate can be upper bounded as

   \begin{align}
   \label{eq:boundCor}
       p_\mathrm{L} \leq &\frac{ (A+1)^{t+1} \binom{2t + 1}{t +1} L^{Z}_{2t +1} }{(A+2)^{t+1}} \, p^{t+1} \nonumber \\ 
       &+ \frac{ 2^{t+1} \left[ \binom{2t + 1}{t +1} L^{X}_{2t +1}  + \binom{2t+2}{t+1} L^{X}_{2t +2} / 2 \right]}{(A+2)^{t+1}} \, p^{t+1} \,.
   \end{align}

\end{corollary}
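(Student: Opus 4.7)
The plan is to combine the asymptotic logical error rate approximation in \eqref{eq:error_probWithBetaApprox} with the lower bound on $\beta_{t+1}$ established in Theorem~\ref{th:Bound}, and then specialize the channel-probability ratios to the asymmetric channel using the asymmetry parameter $A$. The corollary is essentially an algebraic reformulation of Theorem~\ref{th:Bound} under the constraint $p_\mathrm{X}=p_\mathrm{Y}$, so the main work is to translate the factors $\left((\pz+\py)/p\right)^{t+1}$ and $\left((\px+\py)/p\right)^{t+1}$ into closed-form expressions in $A$.

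First, from the defining relation $A = 2\pz/(p - \pz)$ together with $p = \px + \py + \pz$ and $\px = \py$, I would solve for the individual error probabilities. Rearranging gives $\pz = Ap/(A+2)$, and the assumption $\px = \py$ together with $p - \pz = 2\px$ yields $\px = \py = p/(A+2)$. Substituting these into the ratios appearing in the bound of Theorem~\ref{th:Bound} produces
\begin{align}
\left(\frac{\pz+\py}{p}\right)^{t+1} &= \left(\frac{A+1}{A+2}\right)^{t+1}, \\
\left(\frac{\px+\py}{p}\right)^{t+1} &= \left(\frac{2}{A+2}\right)^{t+1}.
\end{align}

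Next, I would invoke Theorem~\ref{th:Bound} to upper bound $1-\beta_{t+1}$ by the sum of the two terms on the right-hand side of \eqref{eq:bound}, and plug this into the asymptotic expression \eqref{eq:error_probWithBetaApprox}, namely $p_\mathrm{L} \approx (1-\beta_{t+1})\binom{n}{t+1}p^{t+1}$. The factors $\binom{n}{t+1}$ cancel against the denominators in Theorem~\ref{th:Bound}, and inserting the two closed-form ratios derived above yields exactly the two summands of \eqref{eq:boundCor}. The structural separation between the $\M{Z}_L$ contribution (proportional to $(A+1)^{t+1}$) and the $\M{X}_L$ contribution (proportional to $2^{t+1}$) is a direct consequence of the asymmetric weighting already present in Theorem~\ref{th:Bound}.

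I do not anticipate a genuine obstacle here, since the argument is essentially symbolic substitution; the only point requiring care is the interpretation of the bound. The approximation \eqref{eq:error_probWithBetaApprox} is valid for $p \ll 1$, so the corollary should be read as an asymptotic upper bound in that regime, inheriting this qualifier from the underlying Taylor-type expansion. I would also remark, consistently with the discussion in the proof of Theorem~\ref{th:Bound}, that the bound does not exploit degeneracies (overlaps among logical operator supports), which is precisely what keeps the expression in a clean closed form in terms of the enumerators $L^{Z}_{2t+1}$, $L^{X}_{2t+1}$, and $L^{X}_{2t+2}$ computed in the preceding subsection.
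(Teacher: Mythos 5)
Your proposal is correct and follows the same (implicit) route as the paper: the corollary is a direct specialization of Theorem~\ref{th:Bound}, obtained by solving $A = 2\pz/(p-\pz)$ with $\px=\py$ to get $\pz = Ap/(A+2)$ and $\px=\py=p/(A+2)$, substituting $(\pz+\py)/p=(A+1)/(A+2)$ and $(\px+\py)/p=2/(A+2)$, and combining with \eqref{eq:error_probWithBetaApprox} so that the $\binom{n}{t+1}$ factors cancel. Your caveat that the resulting inequality is an asymptotic (leading-order in $p$) upper bound inheriting the $p\ll 1$ regime of \eqref{eq:error_probWithBetaApprox} is the right reading and matches how the paper uses the result.
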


\section{Numerical Results}\label{sec:NumRes}

\begin{table*}[t]
    \centering
    \setlength{\tabcolsep}{3pt}
    \caption{Fraction of non-correctable error patterns per error class of cylindrical codes using \acl{MWPM}.}
    \label{tab:Err}
    \small
    \begin{tabular}{lC{1.3cm}C{1.3cm}C{1.3cm}C{1.3cm}C{1.3cm}C{1.3cm}C{1.3cm}C{1.3cm}C{1.3cm}C{1.3cm}}
        \toprule
        \rowcolor[gray]{.95}
        \textbf{Code} & $\M{X}\M{X}$ & $\M{X}\M{Z}$ & $\M{X}\M{Y}$ & $\M{Z}\M{Z}$ & $\M{Z}\M{Y}$ & $\M{Y}\M{Y}$ & & & &\\
        \midrule
        $[[15,1,3]]$ Cyl. & $0.257$ & $0$ & $0.257$  & $0.086$  & $0.086$ & $0.343$ \\
        $[[15,1,3]]$ M\"{o}b. & $0.371$ & $0$ & $0.371$  & $0.029$  & $0.029$ & $0.400$ \\
        $[[25,1,3/5]]$ Cyl. & $0.150$ & $0$ & $0.150$ & $0$ & $0$ & $0.150$ \\
        $[[25,1,3/5]]$ M\"{o}b. & $0.150$ & $0$ & $0.150$ & $0$ & $0$ & $0.150$ \\
        \midrule 
        \rowcolor[gray]{.95}
        \textbf{Code} & $\M{X}\M{X}\M{X}$ & $\M{X}\M{X}\M{Z}$ & $\M{X}\M{X}\M{Y}$ & $\M{X}\M{Z}\M{Z}$ & $\M{X}\M{Z}\M{Y}$ & $\M{X}\M{Y}\M{Y}$ & $\M{Z}\M{Z}\M{Z}$ & $\M{Z}\M{Z}\M{Y}$ & $\M{Z}\M{Y}\M{Y}$ & $\M{Y}\M{Y}\M{Y}$\\
        \midrule
        $[[25,1,3/5]]$ Cyl. & $0.384$ & $0.150$ & $0.384$ & $0$ & $0.150$ & $0.384$ & $0.013$ & $0.013$ & $0.163$ & $0.397$ \\
        $[[25,1,3/5]]$ M\"{o}b. & $0.396$ & $0.150 $ & $0.396$ & $0$ & $0.150$ & $0.396 $ & $0.004$ & $0.004$ & $0.154$ & $0.401$ \\
        $[[45,1,5]]$ Cyl. & $0.019$ & $0$ & $0.019$ & $0$ & $0$ & $0.019$ & $0.004$ & $0.004$ & $0.004$ & $0.023$ \\
        $[[45,1,5]]$ M\"{o}b. & $0.025 $ & $0$ & $0.025  $ & $0$ & $0.150$ & $0.0251 $ & $7\cdot 10^{-4}$ & $7\cdot 10^{-4}$ & $7\cdot 10^{-4}$ & $0.401  $ \\
        \midrule 
     \end{tabular}
     \begin{tabular}{lC{2cm}C{2cm}C{2cm}C{2cm}C{2cm}C{2cm}C{2cm}}
         & & \multicolumn{4}{c} {$1 - \beta_2(A)$} \\
        \cmidrule{3-6} 
        \rowcolor[gray]{.95}
        \textbf{Code}& & $A=1$ & $A=10$ & $A=100$ & $A\to \infty$ & & \\
        \midrule
       $[[13,1,3]]$ & & $0.24$ &$0.233$ &$0.265$ & $0.270$ & & \\
        $[[15,1,3]]$ Cyl. & & $0.15$ & $0.080$ & $0.084$& $0.091$ & & \\
        $[[15,1,3]]$ M\"{o}b. & & $0.18$ & $0.034$ & $0.028$& $0.029$ & & \\
        \bottomrule
    \end{tabular}
\end{table*}

In this section we numerically evaluate the performance of cylindrical codes with \ac{MWPM} decoding via Monte Carlo simulations and we provide a comparison with surface codes. 
All numerical simulations are performed by running the decoder until a minimum of 100 errors are reached, ensuring reliable results.
In doing so, we exploit Lemon C++ library \cite{DezBalJut:11}, which provides an implementation of graphs and networks algorithms.
Moreover, In Tab.~\ref{tab:Err} we report for some cylindrical codes and M\"{o}bius codes of interest the percentage of non-correctable errors for each error class $f_j(i,\ell)$, which we have evaluated by exhaustive search with a \ac{MWPM} decoder. 
For instance, in the case of the $[[15,1,3]]$ cylindrical code, it results $f_2(0,2) =~0.257$. 
Exploiting these tabular values, we can easily write analytical expressions for the code performance. 
This result can be used to analyze and design complex systems without implementing the decoder. 
As an example, for small $p$, the logical error rate of a $[[15,1,3]]$ cylindrical code tends to 
\begin{align}\label{eq:RhoL913XZZX}
    p_\mathrm{L} \approx \frac{0.086A^2 + 0.172A + 1.114}{(A + 2)^2} \, \binom{15}{2} p^2 \,.
\end{align}
\begin{figure}[t]
	\centering
	\resizebox{\columnwidth}{!}{ 
%
%
\definecolor{mycolor1}{rgb}{0.00000,0.44700,0.74100}%
\definecolor{mycolor2}{rgb}{0.85000,0.32500,0.09800}%
\definecolor{mycolor3}{rgb}{0.92900,0.69400,0.12500}%
\definecolor{mycolor4}{rgb}{0.49400,0.18400,0.55600}%
\definecolor{mycolor5}{rgb}{0.46600,0.67400,0.18800}%
\begin{tikzpicture}

\begin{axis}[%
name = match,
width=4.5in,
height=3.5in,
at={(0in,0in)},
scale only axis,
xmode=log,
xmin=0.005,
xmax=0.14,
xminorticks=true,
xlabel={ Physical Qubit Error Probability $p$},
xlabel style={font=\color{white!15!black}, font = \Large},
ymode=log,
ymin=0.0001,
ymax=1,
yminorticks=true,
ylabel={ Logical Qubit Error Probability $p_L$},
ylabel style={font=\color{white!15!black}, font = \Large},
axis background/.style={fill=white},
tick label style={black, semithick, font = \fontsize{13.5}{13.5}},
xmajorgrids,
ymajorgrids,
legend style={at={(0.03,0.97)}, anchor=north west, legend cell align=left, align=left, font = \Large,  draw=white!15!black}
]



\addplot [color=graphGreen, line width=1.4pt, dashed
]
  table[row sep=crcr]{%
0.001	1.872e-05\\
0.01021429	0.00195308980222075\\
0.01942857	0.00706622589962453\\
0.02864286	0.0153581393904981\\
0.03785714	0.0268288122768981\\
0.04707143	0.0414782694564245\\
0.05628571	0.0593064791318208\\
0.0655	0.08031348\\
0.07471429	0.104499254437421\\
0.08392857	0.131863771021225\\
0.09314286	0.162407089147298\\
0.10235714	0.196129142520098\\
0.11157143	0.233030004334825\\
0.12078571	0.273109594496621\\
0.13	0.316368\\
0.139	0.36168912\\
0.1482	0.4111518528\\
0.1574	0.4637835072\\
}; \label{plot:[[13,1,3]] depo}

\addplot [color=brightRed, line width=1.4pt, dashed
]
  table[row sep=crcr]{%
0.001	1.575e-05\\
0.01021429	0.00164322459321457\\
0.01942857	0.00594514198285718\\
0.02864286	0.0129215115064287\\
0.03785714	0.0225723180214287\\
0.04707143	0.0348975824753572\\
0.05628571	0.0498972781157146\\
0.0655	0.0675714375\\
0.07471429	0.0879200458007146\\
0.08392857	0.110943076580357\\
0.09314286	0.136640579811429\\
0.10235714	0.165012499716429\\
0.11157143	0.196058897877857\\
0.12078571	0.229779706908215\\
0.13	0.266175\\
0.139	0.30430575\\
0.1482	0.34592103\\
0.1574	0.39020247\\
}; \label{plot:[[15,1,3]] depo}

\addplot [color=graphGreen, line width=1.4pt 
]
  table[row sep=crcr]{%
0.001	2.106e-05\\
0.01021429	0.00219722602749835\\
0.01942857	0.00794950413707759\\
0.02864286	0.0172779068143104\\
0.03785714	0.0301824138115104\\
0.04707143	0.0466630531384776\\
0.05628571	0.0667197890232984\\
0.0655	0.090352665\\
0.07471429	0.117561661242098\\
0.08392857	0.148346742398878\\
0.09314286	0.18270797529071\\
0.10235714	0.22064528533511\\
0.11157143	0.262158754876678\\
0.12078571	0.307248293808698\\
0.13	0.355914\\
0.139	0.40690026\\
0.1482	0.4625458344\\
0.1574	0.5217564456\\
};\label{plot:[[13,1,3]] p flip}

\addplot [color=brightRed, line width=1.4pt 
]
  table[row sep=crcr]{%
0.001	8.925e-06\\
0.01021429	0.000931160602821592\\
0.01942857	0.00336891379028573\\
0.02864286	0.00732218985364293\\
0.03785714	0.0127909802121429\\
0.04707143	0.0197752967360357\\
0.05628571	0.0282751242655716\\
0.0655	0.03829048125\\
0.07471429	0.0498213592870716\\
0.08392857	0.0628677433955357\\
0.09314286	0.0774296618931429\\
0.10235714	0.0935070831726429\\
0.11157143	0.111100042130786\\
0.12078571	0.130208500581322\\
0.13	0.1508325\\
0.139	0.172439925\\
0.1482	0.196021917\\
0.1574	0.221114733\\
}; \label{plot:[[15,1,3]] p flip}


\addplot [only marks, color=brightRed, line width=1.3pt, mark=o, mark size = 2.5pt, mark options={solid, brightRed}]
  table[row sep=crcr]{%
0.01	0.001585\\
0.02	0.00596\\
0.05	0.03517\\
0.08	0.08294\\
0.1	0.12049\\
0.12	0.16245\\
}; \label{plot:simdep}

\addplot [only marks, color=brightRed, line width=1.3pt, mark=square, mark size = 2.5pt, mark options={solid, brightRed}]
  table[row sep=crcr]{%
0.01	0.00091\\
0.02	0.00364\\
0.05	0.02593\\
0.08	0.06891\\
0.1	0.10512\\
0.12	0.14786\\
}; \label{plot:simph}



\end{axis}

\node[draw,fill=white,inner sep=1.5pt,above left=0.5em] at (match.south east){
  {\renewcommand{\arraystretch}{1.2}
    \begin{tabular}{lcc}
        & $A = 1$ & $A \to \infty$  \\
       \text{$[[13,1,3]]$} & \ref{plot:[[13,1,3]] depo} & \ref{plot:[[13,1,3]] p flip}  \\
        \text{$[[15,1,3]]$ Cyl.} & \ref{plot:[[15,1,3]] depo} & \ref{plot:[[15,1,3]] p flip}  \\
        \text{$[[15,1,3]]$ Cyl. Sim.} & \ref{plot:simdep} & \ref{plot:simph}  \\
  \end{tabular}
}};

\end{tikzpicture}%
	} 
	\caption{Logical error rate over physical error rate, $[[13,1,3]]$ surface code and $[[15,1,3]]$ cylindrical code with \ac{MWPM} decoder. Depolarizing and phase flip channels channel. The curves refer to the asymptotic approximations \eqref{eq:error_probWithBetaApprox}. Squares and circles correspond to \ac{MWPM} decoding. 
		\label{Fig:plot_disk_match}} 
\end{figure}
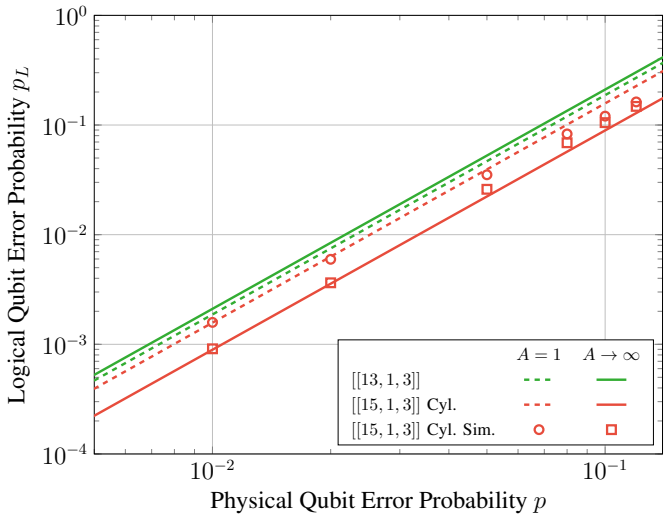

\emph{1) Numerical validation:}
Fig.~\ref{Fig:plot_disk_match} shows a comparison between the asymptotic logical error rate for the $[[13,1,3]]$ surface code and the $[[15,1,3]]$ cylindrical code over depolarizing and phase flip channels, assuming a \ac{MWPM} decoder. These error probabilities are computed using \eqref{eq:error_probWithBetaApprox} with the values of $\beta_2$ presented above. We can see that, over a symmetric channel, the cylindrical code works better but the performance are similar to the surface code. However, the logical error rate of the surface code increases with the bias of the channel. On the contrary, the ability of the $[[15,1,3]]$ cylindrical code to correct all possible $\M{Z}$ ($\M{X}$ for the dual code) logical errors of weight $w = 4$ increases its error correction capability. Indeed, in a channel with $A \to \infty$, the cylindrical code works much better than the standard surface code. 
In Fig.~\ref{Fig:plot_disk_match} we also provide a comparison with \ac{MWPM} simulations for the $[[15,1,3]]$ cylindrical code, showing that our asymptotic approximations \eqref{eq:error_probWithBetaApprox} are tight to the Monte Carlo simulations for values of physical error rate $p < 0.02$.

\emph{2) Logical error rate vs. channel asymmetry:}  
Fig.~\ref{Fig:plot_varA} shows a comparison between the asymptotic performance, computed using \eqref{eq:error_probWithBetaApprox}, of the $[[13,1,3]]$ and the $[[41,1,5]]$ surface codes, and the $[[15,1,3]]$, and the $[[45,1,5]]$ cylindrical and M\"{o}bius codes, as a function of the channel asymmetry parameter $A$. 
These curves are computed for a value of physical error rate $p = 0.01$. In the case of the $[[13,1,3]]$ and the $[[41,1,5]]$ surface codes, the logical error rate has a minimum value for $A=2.9$ and for $A = 1.8$, respectively. Moreover, increasing or decreasing the asymmetry of the channel with respect to this value, reduces the error correction capability of the same amount. Indeed, symmetric surface codes, especially those with distance $d > 3$, do not perform well over channels where one kind of Pauli error happens rather more frequently than the others. 
On the contrary, for the $[[15,1,3]]$ and the $[[45,1,5]]$ cylindrical codes, the logical error rate has a minimum for $A = 10.7$ and for $A = 5.3$, respectively. 
Specifically, for values of asymmetry higher than these minima, the logical error rate increases very slowly. 
This asses how the error correction capability of cylindrical codes is enhanced over asymmetric channels. 
The logical error rate of the $[[15,1,5]]$  M\"{o}bius code has its minimum value for $A = 78.8$, while the $[[45,1,5]]$ for $A = 14.9$. 
Moreover, these codes outperform surface codes across all values of channel asymmetry, exhibiting a significant improvement in error correction capability, i.e. exceeding an order of magnitude for moderate asymmetry values. 

To validate our theoretical findings, we compute bounds on error correction capability of cylindrical and M\"{o}bius codes, employing \eqref{eq:boundCor}. 
We observe that these upper bounds are very tight to the real logical error rate for a wide range of channel asymmetry values. 
Finally, we want to discuss the reason why the performance curves exhibit a minimum when varying the asymmetry. 
This behavior occurs because the code can successfully correct all patterns involving both $\mathbf{X}$ and $\mathbf{Z}$ errors, while there are a few patterns consisting solely of $\mathbf{Z}$ errors that remain uncorrected (see Table~\ref{tab:Err}).
More generally, this phenomenon arises in \ac{CSS} codes where $\mathbf{X}$ and $\mathbf{Z}$ errors are decoded independently. Specifically, if a code can correct up to $t$ errors, a decoder that treats $\mathbf{X}$ and $\mathbf{Z}$ errors independently will always be able to correct any pattern consisting of up to $t$ $\mathbf{X}$ errors plus $t$ $\mathbf{Z}$ errors.


%
\begin{figure}[t]
	\centering
	\resizebox{\columnwidth}{!}{ 
	    \input{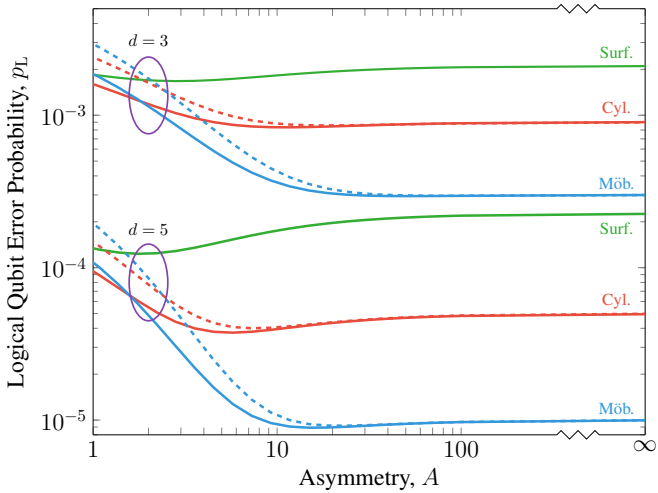} 
	} 
	\caption{Asymptotic behaviour of the logical error rate over channel asymmetry $A$ for surface and cylindrical, and M\"{o}bius codes. The logical error probabilities are computed at physical error probability $p = 0.01$. Solid line: exact \eqref{eq:error_probWithBetaApprox}. Dashed line: upper bound \eqref{eq:boundCor}.}
	\label{Fig:plot_varA}
\end{figure}

\emph{3) Analytical upper bound:}
Fig.~\ref{Fig:plot_BOUNDS} shows the analytical upper bound \eqref{eq:boundCor} for the $[[91,1,7]]$, the $[[153,1,9]]$, and the $[[231,1,11]]$ cylindrical and M\"{o}bius codes for a physical error probability $p = 0.001$.
Note that for codes of these dimensions, it becomes impractical to compute the exact asymptotic logical error rate through exhaustive search of faulty error patterns, as was feasible for smaller codes, or to simulate performance using the \ac{MWPM} decoder.
Consequently, this bound represents an important alternative for estimating the logical error probability of these codes, particularly when the channel asymmetry is significantly pronounced.

\begin{figure}[t]
	\centering
	\resizebox{\columnwidth}{!}{ 
	    \input{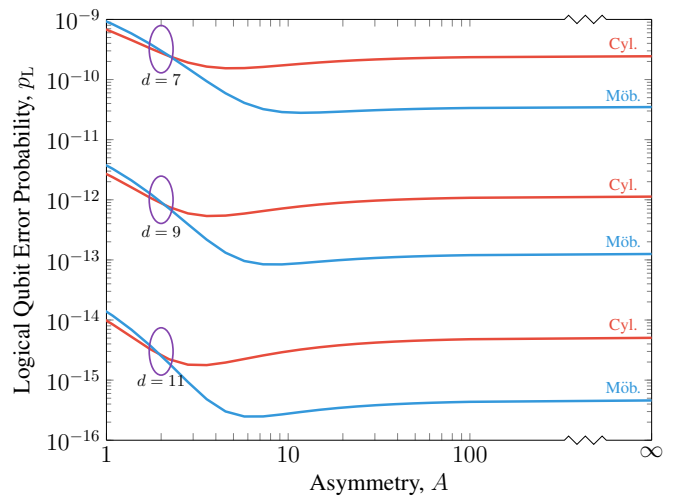} 
	} 
	\caption{Evaluating the code performance for high values of distance, using the analytical upper bound \eqref{eq:boundCor}. The logical error probabilities are computed at physical error probability $p = 0.001$.}
	\label{Fig:plot_BOUNDS}
\end{figure}

\emph{4) Logical error rate vs. physical error rate:} 
Fig.~\ref{Fig:plot_diskfinal} shows the asymptotic logical error rate of the distance $d=5$ surface, cylindrical and M\"{o}bius codes, over channels with different values of asymmetries. As anticipated, over a depolarizing channel ($A=1$), the M\"{o}bius code's error correction capability is inferior to that of the cylindrical code. Indeed, although M\"{o}bius code has only $\binom{d}{t+1}$ faulty $\M{Z}$ error patterns 
(which are part of the untwisted central row), it has a bigger amount of faulty $\M{X}$ errors of weight $t+1$ with respect to the cylindrical version.  Moreover, as the asymmetry of the channel increases, the error correction capability of cylindrical and M\"{o}bius codes is enhanced. In particular, in the asymptotic condition $A \to \infty$, the $[[45,1,5]]$ M\"{o}bius code outperforms both surface and cylindrical codes. Additionally, it's worth noting that the performance of the M\"{o}bius code over a channel with $A=10$ is nearly indistinguishable from that of a phase flip channel. This observation underscores the great advantage this structure provides in terms of reducing logical error rates when dealing with slightly asymmetric channels. 

Finally, in Fig.~\ref{Fig:plot_diskF2} displays the logical error rates for various asymmetric surface, cylindrical, and M\"{o}bius codes. Specifically, for moderately polarized channels, the performance of these asymmetric codes is quite similar. However, as the channel asymmetry increases, the M\"{o}bius code emerges as the best performer. Note that the slope of the three curves changes moving from $A= 10$ to $A \to \infty$ since these codes have distance $d=5$ over a phase flip channel.     
%
%

%
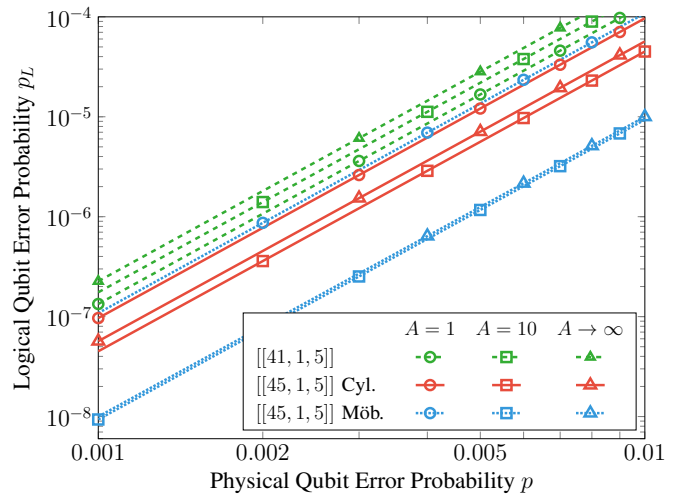
\begin{figure}[t]
	\centering
	\resizebox{0.49\textwidth}{!}{ 
%
%
\definecolor{mycolor1}{rgb}{0.00000,0.44700,0.74100}%
\definecolor{mycolor2}{rgb}{0.85000,0.32500,0.09800}%
\definecolor{mycolor3}{rgb}{0.92900,0.69400,0.12500}%
\definecolor{mycolor4}{rgb}{0.49400,0.18400,0.55600}%
\definecolor{mycolor5}{rgb}{0.46600,0.67400,0.18800}%
\definecolor{mycolor6}{rgb}{0.30100,0.74500,0.93300}%
\definecolor{mycolor7}{rgb}{0.63500,0.07800,0.18400}%
\begin{tikzpicture}

\begin{axis}[%
name = plot,
width=4.5in,
height=3.5in,
at={(0in,0in)},
scale only axis,
xmode=log,
xmin=0.001,
xmax=0.01,
xminorticks=true,
xlabel style={font=\color{white!15!black}, font = \Large},
xlabel={  Physical Qubit Error Probability $p$},
xtick = {0.001, 0.002, 0.003, 0.004, 0.005, 0.006, 0.007, 0.008, 0.009, 0.01, 0.02, 0.03, 0.04, 0.05},
xticklabels = {$0.001$, $0.002$, , , $0.005$, , , , , $0.01$, $0.02$, , , $0.05$},
tick label style={black, semithick, font=\Large},
ymode=log,
ymin=6e-09,
ymax=1e-04,
yminorticks=true,
ylabel style={font=\color{white!15!black}, font = \Large},
ylabel={ Logical Qubit Error Probability $p_L$},
axis background/.style={fill=white},
title style={font=\bfseries},
legend style={legend cell align=left, align=left, draw=white!15!black}
]

\addplot [color=graphGreen, dashed, line width=1.5pt, mark=o, mark size = 3pt, mark options={solid, graphGreen}, mark repeat = 2 ]
  table[row sep=crcr]{%
0.001	1.3376325925926e-07\\
0.002	1.07010607407408e-06\\
0.003	3.61160800000001e-06\\
0.004	8.56084859259263e-06\\
0.005	1.67204074074075e-05\\
0.006	2.88928640000001e-05\\
0.007	4.58807979259261e-05\\
0.008	6.8486788740741e-05\\
0.009	9.75134160000004e-05\\
0.01	0.00013376325925926\\
}; \label{sur:41_1}

\addplot [color=graphGreen, dashed, line width=1.5pt, mark=square, mark size = 3pt, mark options={solid, graphGreen}, mark repeat = 2, mark phase = 2]
  table[row sep=crcr]{%
0.001	1.75115175925926e-07\\
0.002	1.40092140740741e-06\\
0.003	4.72810975000001e-06\\
0.004	1.12073712592593e-05\\
0.005	2.18893969907408e-05\\
0.006	3.78248780000001e-05\\
0.007	6.00645053425928e-05\\
0.008	8.96589700740743e-05\\
0.009	0.00012765896325\\
0.01	0.000175115175925926\\
}; \label{sur:41_1 A10}

\addplot [color=graphGreen, dashed, line width=1.5pt, mark=triangle, mark size = 3pt, mark options={solid, graphGreen}, mark repeat = 2]
  table[row sep=crcr]{%
0.001	2.25991322026034e-07\\
0.002	1.80793057620827e-06\\
0.003	6.10176569470292e-06\\
0.004	1.44634446096662e-05\\
0.005	2.82489152532542e-05\\
0.006	4.88141255576234e-05\\
0.007	7.75150234549297e-05\\
0.008	0.000115707556877329\\
0.009	0.000164747673756979\\
0.01	0.000225991322026034\\
}; \label{sur:41_1 pp}

\addplot [color=brightRed, line width=1.5pt, mark=o, mark size = 3pt, mark options={solid, brightRed}, mark repeat = 2]
  table[row sep=crcr]{%
0.001	9.67022222222224e-08\\
0.002	7.73617777777779e-07\\
0.003	2.61096000000001e-06\\
0.004	6.18894222222224e-06\\
0.005	1.20877777777778e-05\\
0.006	2.08876800000001e-05\\
0.007	3.31688622222223e-05\\
0.008	4.95115377777779e-05\\
0.009	7.04959200000002e-05\\
0.01	9.67022222222224e-05\\
}; \label{cyl:45_A1}

\addplot [color=brightRed, line width=1.5pt, mark=square, mark size = 3pt, mark options={solid, brightRed}, mark repeat = 2, mark phase = 2]
  table[row sep=crcr]{%
0.001	4.49678472222222e-08\\
0.002	3.59742777777778e-07\\
0.003	1.214131875e-06\\
0.004	2.87794222222222e-06\\
0.005	5.62098090277778e-06\\
0.006	9.713055e-06\\
0.007	1.54239715972222e-05\\
0.008	2.30235377777778e-05\\
0.009	3.2781560625e-05\\
0.01	4.49678472222222e-05\\
}; \label{cyl:45_A10}

\addplot [color=brightRed, line width=1.5pt, mark=triangle, mark size = 4pt, mark options={solid, brightRed}, mark repeat = 2]
  table[row sep=crcr]{%
0.001	5.67598297205106e-08\\
0.002	4.54078637764085e-07\\
0.003	1.53251540245379e-06\\
0.004	3.63262910211268e-06\\
0.005	7.09497871506382e-06\\
0.006	1.22601232196303e-05\\
0.007	1.94686215941351e-05\\
0.008	2.90610328169014e-05\\
0.009	4.13779158662522e-05\\
0.01	5.67598297205106e-05\\
}; \label{cyl:45_pp}

\addplot [color=brightBlue, densely dotted, line width=1.5pt, mark=o, mark size=3pt, mark options={solid, brightBlue}, mark repeat = 2, mark phase = 2 ]
  table[row sep=crcr]{%
0.001	1.08489382222222e-07\\
0.002	8.67915057777777e-07\\
0.003	2.92921332e-06\\
0.004	6.94332046222222e-06\\
0.005	1.35611727777778e-05\\
0.006	2.343370656e-05\\
0.007	3.72118581022222e-05\\
0.008	5.55465636977777e-05\\
0.009	7.908875964e-05\\
0.01	0.000108489382222222\\
}; \label{mob:45_A1}

\addplot [color=brightBlue, densely dotted, line width=1.5pt, mark=square, mark size = 3pt, mark options={solid, brightBlue}, mark repeat = 2]
  table[row sep=crcr]{%
0.001	9.3435565972218e-09\\
0.002	7.47484527777744e-08\\
0.003	2.52276028124989e-07\\
0.004	5.97987622222195e-07\\
0.005	1.16794457465272e-06\\
0.006	2.01820822499991e-06\\
0.007	3.20483991284708e-06\\
0.008	4.78390097777756e-06\\
0.009	6.81145275937469e-06\\
0.01	9.3435565972218e-06\\
}; \label{mob:45_A10}

\addplot [color=brightBlue, densely dotted, line width=1.5pt, mark=triangle, mark size = 4pt, mark options={solid, brightBlue}, mark repeat = 2, mark phase = 2]
  table[row sep=crcr]{%
0.001	9.98973003081006e-09\\
0.002	7.99178402464805e-08\\
0.003	2.69722710831872e-07\\
0.004	6.39342721971844e-07\\
0.005	1.24871625385126e-06\\
0.006	2.15778168665497e-06\\
0.007	3.42647740056785e-06\\
0.008	5.11474177577475e-06\\
0.009	7.28251319246054e-06\\
0.01	9.98973003081007e-06\\
}; \label{mob:45_pp}

\end{axis}

\node[draw,fill=white,inner sep=1.5pt, above left=0.5em] at (plot.south east){
\large
  {\renewcommand{\arraystretch}{1.2}
    \begin{tabular}{lccc}
    & \textbf{$A=1$} & \textbf{$A=10$} & \textbf{$A \to \infty$} \\
    \text{$[[41,1,5]]$} & \ref{sur:41_1} & $\ref{sur:41_1 A10}$ & $\ref{sur:41_1 pp}$  \\
    \text{$[[45,1,5]]$ Cyl.}  & \ref{cyl:45_A1} & \ref{cyl:45_A10} & \ref{cyl:45_pp} \\ 
    \text{$[[45,1,5]]$ M\"{o}b.}  & \ref{mob:45_A1} & \ref{mob:45_A10} & \ref{mob:45_pp} \\
  \end{tabular}
}};

\end{tikzpicture}%
	} 
	\caption{Performance of  topological codes with symmetric structure. $[[41,1,5]]$ surface code, $[[45,1,5]]$ cylindrical code, and $[[45,1,5]]$ M\"{o}bius code. Asymmetric errors, with asymmetry $A=1$, $A=10$, $A \to \infty$.
		\label{Fig:plot_diskfinal}}
\end{figure}
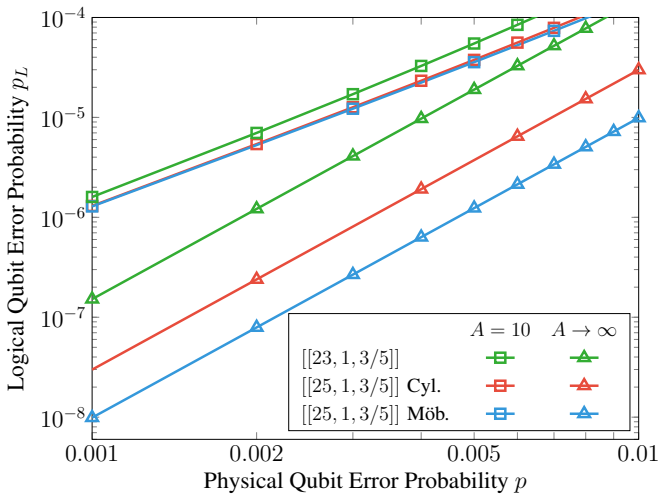
\begin{figure}[t]
	\centering
	\resizebox{0.49\textwidth}{!}{ 
%
%
\definecolor{mycolor1}{rgb}{0.00000,0.44700,0.74100}%
\definecolor{mycolor2}{rgb}{0.85000,0.32500,0.09800}%
\definecolor{mycolor3}{rgb}{0.92900,0.69400,0.12500}%
\definecolor{mycolor4}{rgb}{0.49400,0.18400,0.55600}%
\definecolor{mycolor5}{rgb}{0.46600,0.67400,0.18800}%
\definecolor{mycolor6}{rgb}{0.30100,0.74500,0.93300}%
\definecolor{mycolor7}{rgb}{0.63500,0.07800,0.18400}%
\begin{tikzpicture}

\begin{axis}[%
name = plot,
width=4.5in,
height=3.5in,
at={(0in,0in)},
scale only axis,
xmode=log,
xmin=0.001,
xmax=0.01,
xminorticks=true,
xlabel style={font=\color{white!15!black}, font = \Large},
xlabel={ Physical Qubit Error Probability $p$},
xtick = {0.001, 0.002, 0.003, 0.004, 0.005, 0.006, 0.007, 0.008, 0.009, 0.01, 0.02, 0.03, 0.04, 0.05},
xticklabels = {$0.001$, $0.002$, , , $0.005$, , , , , $0.01$, $0.02$, , , $0.05$},
tick label style={black, semithick, font=\Large},
ymode=log,
ymin=6e-09,
ymax=1e-04,
yminorticks=true,
ylabel style={font=\color{white!15!black}, font = \Large},
ylabel={ Logical Qubit Error Probability $p_L$},
axis background/.style={fill=white},
title style={font=\bfseries},
legend style={legend cell align=left, align=left, draw=white!15!black}
]
\addplot [color=brightRed, line width=1.5pt, mark=square, mark size = 3pt, mark options={solid, brightRed}]
  table[row sep=crcr]{%
0.001	1.30107783564814e-06\\
0.002	5.40862268518517e-06\\
0.003	1.26291015625e-05\\
0.004	2.32689814814814e-05\\
0.005	3.76347294560184e-05\\
0.006	5.60328124999999e-05\\
0.007	7.87696976273146e-05\\
0.008	0.000106151851851852\\
0.009	0.0001384857421875\\
0.01	0.000176077835648148\\
}; \label{cyl:25_10}

\addplot [color=brightRed, line width=1.5pt, mark=triangle, mark size = 4pt, mark options={solid, brightRed}, mark repeat = 2, mark phase = 2]
  table[row sep=crcr]{%
0.001	2.98999104843985e-08\\
0.002	2.39199283155231e-07\\
0.003	8.07297579838952e-07\\
0.004	1.91359426236202e-06\\
0.005	3.73748879255088e-06\\
0.006	6.458380632232e-06\\
0.007	1.02556692432318e-05\\
0.008	1.53087540873768e-05\\
0.009	2.17970346264934e-05\\
0.01	2.98999103224081e-05\\
}; \label{cyl:25_pp}

\addplot [color=brightBlue, line width=1.5pt, mark=square, mark size = 3pt, mark options={solid, brightBlue}, mark repeat = 2 ]
  table[row sep=crcr]{%
0.001	1.28579388310185e-06\\
0.002	5.2863510648148e-06\\
0.003	1.221643484375e-05\\
0.004	2.22908085185185e-05\\
0.005	3.57242353877314e-05\\
0.006	5.27314787499999e-05\\
0.007	7.3527301903935e-05\\
0.008	9.83264681481479e-05\\
0.009	0.00012734374078125\\
0.01	0.000160793883101852\\
}; \label{mob:25_10}

\addplot [color=brightBlue, line width=1.5pt, mark=triangle, mark size = 4pt, mark options={solid, brightBlue} ]
  table[row sep=crcr]{%
0.001	9.88997051421826e-09\\
0.002	7.91197633937887e-08\\
0.003	2.67029200644085e-07\\
0.004	6.3295810427048e-07\\
0.005	1.23624629627835e-06\\
0.006	2.13623359867306e-06\\
0.007	3.39225983345999e-06\\
0.008	5.06366482264452e-06\\
0.009	7.20978838823201e-06\\
0.01	9.88997035222784e-06\\
}; \label{mob:25_pp}

\addplot [color=graphGreen, line width=1.5pt, mark = square, mark size = 3pt, mark=square, mark options={solid, graphGreen}]
  table[row sep=crcr]{%
0.001	1.59912065972223e-06\\
0.002	6.9929652777778e-06\\
0.003	1.70762578125001e-05\\
0.004	3.27437222222223e-05\\
0.005	5.48900824652779e-05\\
0.006	8.44100625000003e-05\\
0.007	0.000122198386284723\\
0.008	0.000169149777777778\\
0.009	0.000226158960937501\\
0.01	0.000294120659722223\\
}; \label{sur:23_A10}

\addplot [color=graphGreen, line width=1.5pt, mark=triangle, mark size = 4pt, mark options={solid, graphGreen}]
  table[row sep=crcr]{%
0.001	1.51799544814934e-07\\
0.002	1.21439635768427e-06\\
0.003	4.09858770624482e-06\\
0.004	9.71517085813337e-06\\
0.005	1.89749430809867e-05\\
0.006	3.27887016424418e-05\\
0.007	5.20672438101352e-05\\
0.008	7.77213668517038e-05\\
0.009	0.000110661868034785\\
0.01	0.000151799544627014\\
}; \label{sur:23_pp}

\end{axis}

\node[draw,fill=white,inner sep=1.5pt, above left=0.5em] at (plot.south east){
\large
  {\renewcommand{\arraystretch}{1.2}
    \begin{tabular}{lcc}
     & \textbf{$A=10$} & \textbf{$A \to \infty$} \\
     \textbf{$[[23,1,3/5]]$}  & \ref{sur:23_A10} & \ref{sur:23_pp} \\
    \textbf{$[[25,1,3/5]]$} Cyl. & \ref{cyl:25_10} & \ref{cyl:25_pp} \\
    \textbf{$[[25,1,3/5]]$} M\"{o}b.  & \ref{mob:25_10} & \ref{mob:25_pp} \\
    
  \end{tabular}
}};

\end{tikzpicture}%
	} 
	\caption{Performance of  topological codes with asymmetric structure. $[[23,1,3/5]]$ surface code, $[[25,1,3/5]]$ cylindrical code, and $[[25,1,3/5]]$ M\"{o}bius code. Asymmetric errors, with asymmetry $A=10$, $A \to \infty$.}
    \label{Fig:plot_diskF2}
\end{figure}

\begin{table}[t]
    \centering
    \caption{Threshold estimations of Cylindrical and M\"{o}bius codes.}
    \label{tab:thre}
    \begin{tabular}{l c c c c}
    \toprule
         & $A = 1$ & $A = 10$ & $A = \infty$ \\
    \midrule
        Cylindrical Codes & $0.14$ & $0.12$ & $0.10$ \\
         M\"{o}bius Codes & $0.14$ & $0.12$ & $0.10$ \\
    \bottomrule
    \end{tabular}
\end{table}

\emph{5) High physical error rate analysis:} 
For the sake of completeness, we also report the threshold values, a metric related to the performance in the high physical error rate regime.
In particular, Tab.~\ref{tab:thre} shows the threshold values obtained via Monte Carlo simulations for the cylindrical and M\"{o}bius codes. 
The simulations were conducted for channels with varying asymmetry values, revealing that an increase in asymmetry leads to a decrease in the threshold value.

\section{Conclusions}\label{sec:conclusions}
We have presented new topological codes, the cylindrical codes and M\"{o}bius codes. 
We have derived their code parameters and we have proved that they are valid \ac{CSS} codes using chain complexes from topology. 
Additionally, by introducing $d-1$ twists in the construction of cylindrical codes, we have obtained a quasi-planar \ac{CSS} codes, named M\"{o}bius codes. 
Starting from quantum MacWilliams identities, we have exploited the undetectable errors \acl{WE} to study analytically the error correction capability of cylindrical and M\"{o}bius codes. 
In particular, although such codes are tailored for asymmetric errors, they outperform standard surface codes both on depolarizing and polarizing channels. 
Finally, we have provided a numerical analysis of the performance of symmetric and asymmetric cylindrical and M\"{o}bius codes using a \ac{MWPM} decoding. 
The investigation shows that the proposed codes are particularly well-suited for handling asymmetric errors.


\bibliographystyle{IEEEtran}
\bibliography{Files/IEEEabrv,Files/StringDefinitions,Files/StringDefinitions2,Files/refs}

\end{document}